
\documentclass[draft, onecolumn]{IEEEtran}

\addtolength{\topmargin}{9mm}

%
%
\usepackage[utf8]{inputenc} 
\usepackage[T1]{fontenc}
\usepackage{url}
\usepackage{ifthen}
\usepackage{cite}
\usepackage[cmex10]{amsmath} 
\usepackage{amsmath,amsthm}
\usepackage{tikz}
\usepackage{bm}
\usepackage{mathdots}
\usepackage{pgfplots}
\pgfplotsset{compat=1.15}
\usepackage{mathrsfs}
\usetikzlibrary{arrows}
\usepackage{yhmath}
\usepackage{cancel}
\usepackage{color}
\usepackage{siunitx}
\usepackage{array}

\usepackage{multirow}
\usepackage[noend]{algpseudocode}
\usepackage{hyperref}
\usepackage{mathtools}
\usepackage{amssymb}
\usepackage{algorithm}
\usepackage{caption}
\usepackage{comment}
\usepackage{subcaption}
\usepackage{gensymb}
\usepackage{tabularx}
\usepackage{mathtools}

\usepackage{setspace}

\pagestyle{plain} 
\theoremstyle{definition}
\newtheorem{definition}{Definition}
\newtheorem{remark}{Remark}
\newtheorem{example}{Example}
\newtheorem*{example*}{Example}
\newtheorem{theorem}{Theorem}

\newtheorem{lemma}{Lemma}



\usepackage{arydshln}

\setlength{\dashlinedash}{2.5pt}
\setlength{\dashlinegap}{1.5pt}
\setlength{\arrayrulewidth}{0.2pt}

\interdisplaylinepenalty=2500 

\hyphenation{op-tical net-works semi-conduc-tor}
\title{Vers: fully distributed Coded Computing System with Distributed Encoding} 
\author{%
  Nastaran Abadi Khooshemehr, Mohammad Ali Maddah-Ali \\
  Sharif University of Technology,  University of Minnesota \\
  Email: abadi.nastaran@ee.sharif.ir,  maddah@umn.edu
}
\begin{document}



\maketitle

\section{Abstract}
Coded computing has proved to be useful in distributed computing, and has addressed challenges such as straggler workers. We have observed that almost all coded computing systems studied so far consider a setup of one master and some workers. However, recently emerging technologies such as blockchain, internet of things, and federated learning introduce new requirements for coded computing systems. In these systems, data is generated (and probably stored) in a distributed manner, so central encoding/decoding by a master is not feasible and scalable. This paper presents a fully distributed coded computing system that consists of $k\in\mathbb{N}$ data owners and $N\in\mathbb{N}$ workers, where data owners employ workers to do some computations on their data, as specified by a target function $f$ of degree $d\in\mathbb{N}$. As there is no central encoder, workers perform encoding themselves, prior to computation phase.
The challenge in this system is the presence of adversarial data owners that do not know the data of honest data owners but cause discrepancies by sending different versions of data to different workers, which is detrimental to local encodings in workers. There are at most $\beta\in\mathbb{N}$ adversarial data owners, and each distributes at most $v\in\mathbb{N}$ different versions of data. Since the adversaries and their possibly colluded behavior are not known to workers and honest data owners, workers compute tags of their received data, in addition to their main computational task, and send them to data owners in order to help them in decoding. We introduce a tag function that allows data owners to partition workers into sets that previously had received the same data from all data owners. Then, we characterize the fundamental limit of this fully distributed coded computing system, denoted by $t^*$, which is the minimum number of workers whose work can be used to correctly calculate the desired function of data of honest data owners. We show that $t^*=v^{\beta}d(K-1)+1$, and present converse and achievable proofs.

\section{Introduction}\label{introduction section}
Coded computing utilizes coding theory techniques to address the fundamental challenges in distributed computing systems such as straggler mitigation, security, and privacy preservation, and communication bandwidth issues \cite{li2020coded}. The core idea in coded computing is to inject data or computation redundancy in order to accomplish the mentioned goals. For instance, coded computing can be used to accomplish a computation task by using the work of faster worker nodes, and not being dependent on any specific set of workers (straggler mitigation). This line of research has received so much attention, and many aspects of it have been explored, see \cite{ng2021comprehensive} for a survey on coded distributed computing. However, almost all coded computing researches consider a setup of one master (also called other names like parameter server) and several workers, where the master encodes the data, distributes the data and computation tasks among workers, collects the partial results from workers, and constructs the final result. The following reasons motivate us to pursue a different path from the centralized, single-master coded computing systems.
\begin{itemize}
    \item In a distributed computing system where a single master encodes the whole initial data and then decodes all of the results, the master would be a bottleneck for scalability. As the number of workers grows, the master should get more powerful, in terms of computation and communication.
    \item
    In some applications like federated learning \cite{yang2019federated}, the data for computation tasks are not congregated in one place, for reasons like the privacy of the data owners, and/or the huge volume of data. Thus, one master does not have access to the whole data.
    \item
    Computation schemes that are specifically designed for distributed sources of data are better suited for applications where the production of data is distributed by nature, like sharded blockchains, federated learning, and internet of things.
    \item
    In practice, a cluster of servers that act as workers, would serve more than one master, and do computation tasks for several masters in parallel. In this way, the resources of workers would be efficiently utilized, and not left idle.
\end{itemize}
A few works have studied \textit{masterless} coded computing or coded computing with several masters. In \cite{jeong2020fully}, a fully-decentralized and masterless setup, consisting of only $N\in\mathbb{N}$ workers, has been considered. In such a setup, decentralized coding strategies for two particular computation tasks, matrix multiplication, and fast Fourier transform, have been introduced. In matrix multiplication of two arbitrary matrices $\mathbf{A}$ and $\mathbf{B}$, each worker has a portion of $\mathbf{A}$ and $\mathbf{B}$ initially. Then, each worker performs local encoding on its initial data, communicates with other workers, and then multiplies two matrices. At the end of the algorithm, each worker has a portion of the calculated $\mathbf{AB}$. Note that the encodings done by workers, are on their own initial data, and not on data received from other workers. We will argue that distributed encoding of data received from external sources introduces new challenges.

In \cite{sun2021coded}, several masters, each with a matrix-vector multiplication task, share some workers, and the workers have heterogeneous computing powers. The problem is to assign computation tasks to workers such that the overall completion time is minimum. Each worker either serves one master or several masters and divides its computation power between them. In the latter case, each worker simply performs the computation tasks of the masters separately, and no coding is involved. In \cite{wang2022all}, there are $N\in\mathbb{N}$ workers, where each has a piece of data, and wants to have a linear combination of all data. There is no master and workers exchange messages with one another in a communication-efficient way, to calculate their target linear combinations. \cite{sundaram2008distributed} considers a network graph of $N\in\mathbb{N}$ nodes each with an initial value, where a subset of them wants to calculate some arbitrary functions of the initial values. The target function of each node in that subset could be different. In order to achieve this goal, an iterative linear strategy is deployed in which nodes receive messages from their neighbours and obtain a linear combination of the received values and their own value, without any leader. Given the graph of nodes has some particular connectivity properties, a finite number of rounds suffices for nodes to calculate the desired functions of initial values, using the linear combinations obtained in the previous iterations. A similar problem with the presence of adversaries is studied in \cite{sundaram2010distributed}, where adversaries may calculate the linear combinations incorrectly and send incorrect values to their neighbours. The adversaries do not send different values to different neighbours, and may only deviate in calculating the linear combination of the messages of their neighbours and their own message.

Coded computing can be designed to guarantee security in an adversarial environment, as in \cite{yang2021coded,soleymani2021list,chen2018draco,yu2019lagrange}. In the conventional setup of a single master and many workers, the security of the coded computing scheme translates to the resilience of the scheme against adversarial workers, because in such setups, the master is assumed to be honest. Adversarial workers can freely deviate from the protocol, e.g. return an arbitrary result, send different data to different workers, refuse to send data, and so forth. However, when there are several masters, it is reasonable to assume that some masters might be adversarial as well.

In this paper, we study a coded computing system comprised of several masters and workers, where masters employ workers to accomplish computational jobs for them. We refer to masters as \textit{data owners}, as each of them has a piece of data, and aims to have a polynomial target function of that data by using workers.
Some of the data owners are adversarial and try to create discrepancies in the system, in order to cause errors in the computation tasks of honest data owners.  There are communication links between data owners and workers, and also between adversarial data owners, since they can collude with one another. However, like many coded computing systems \cite{li2020coded,ng2021comprehensive}, there is no communication means between workers in our system.
The system works as follows. The data owners distribute their data among workers, but adversarial data owners may distribute different contradictory data to different workers, to corrupt the results. The adversarial data owners are free to cooperate in choosing what they send to workers.  Upon the reception of data, each worker encodes the received data of all data owners, applies the target function on the encoded data, and then returns the result back to data owners, along with a small \textit{tag} of the received data, to inform the honest data owners about the adversarial data owners indirectly. Tags allow honest data owners to partition workers into sets who have received the same message from the adversaries in the first step, and enables honest data owners to decode the results. Finally, the honest data owners can extract their required information from the returned results of the workers. 

We name the described system \textit{Vers}, which is short for versatile. The reason for this naming is that the workers are indeed versatile, and do a range of different tasks: they encode the received data, calculate the target function of the coded data, and calculate tags. We study the fundamental limit of Vers in the case where Lagrange encoding is deployed in workers. This fundamental limit, which we denote by $t^*$, is the minimum number such that honest data owners can correctly and reliably extract their required information from the results of any set of $t^*$ workers. In other words, for any adversarial behavior, any set of $t^*$ workers should be enough for honest data owners to calculate the target function of their data correctly. On the other hand, for $t^*$ to be the minimum indeed, there should exist an adversarial behavior and a set of $t^*-1$ workers whose results do not determine the target of honest data owners uniquely.
It is worth noting that in our previous work \cite{khooshemehr2021fundamental}, we have studied the fundamental limit of a system in which some data owners want to store their data in some storage nodes, such that data can be correctly retrieved later. In that system, some data owners are adversarial and may send different data to different storage nodes.

The type of adversarial behavior in our problem, i.e. adversaries sending inconsistent data to other nodes, has been explored from different points of views in a wide range of distributed systems, such as byzantine agreement \cite{pease1980reaching,lamport2019byzantine,canetti1993fast}, byzantine broadcast \cite{bracha1984asynchronous,cachin2001secure,miller2016honey}, verifiable information dispersal \cite{cachin2005asynchronous,hendricks2007verifying,yang2022dispersedledger}, and distributed key generation \cite{abraham2021reaching}. Even though the adversarial behaviour has been previously known, addressing such behaviour in the context of distributed computing systems like blockchain, federated learning and internet of things is very underexplored. The distinction between this work and the mentioned ones that consider a similar adversarial model is that we deal with computation, rather than consensus. Our goal is to correctly calculate functions of a subset of data, and this does not necessarily require consensus on that data before calculation. 
Generally, inconsistencies caused by the adversaries need to be resolved somehow, for which there are two main approaches: nodes can either resolve the inconsistencies between themselves by communicating with one another, or they can postpone dealing with inconsistencies to subsequent steps such as decoding. The first approach is pursued in many works such as \cite{cachin2005asynchronous}, where nodes exchange enough information with each other to extract consistent data and then use that consistent data in their main task. Since in our model, there is no communication link between workers, they cannot detect the adversarial data owners and then mitigate their effect by exchanging information with one another. Therefore, the first approach alone is not sufficient.
In this work, we choose a combination of these approaches for our particular model: we deal with inconsistencies in the final step in decoding, with help of tags that workers had previously produced and communicated to the data owners. 

The structure of this paper is as follows. We formulate the problem in Section \ref{problem formulation}, and state the main result in Section \ref{main result section}. Then, explain the concept of tag functions in Section \ref{tag functions section} and introduce a good tag function. We define some notions and analyze the system in Section \ref{definitions section}. Finally, we state the fundamental limit of the system in Section \ref{main result section}.

\section{problem formulation}\label{problem formulation}
In this section, we formally introduce $\textrm{Vers}(f,N,K,\beta,v,\textrm{Enc})$, a fully distributed linear coded computing system with parameters $f,N,K,\beta,v$, and encoding algorithm Enc. Consider a system of $K\in\mathbb{N}$ data owners and $N\in\mathbb{N}$ worker nodes, indexed $1,\dots,K$, and $1,\dots,N$, where $K\le N$. We denote the set $\{1,\dots,m\}, m\in\mathbb{N}$ with $[m]$, and $\{0,1,\dots,m\}$ with $[m^+]$.
The data owner $k\in[K]$ has $X_k\in\mathbb{U}$ and wants to have $f(X_k)$, where $f:\mathbb{U}\rightarrow\mathbb{V}$ is an arbitrary polynomial function of degree $d\in\mathbb{N}$, called \textit{target function}, and $\mathbb{U}, \mathbb{V}$ are vector spaces over a finite field $\mathbb{F}$. Data owners employ workers to do the calculations for them, in a distributed manner.
Among the data owners, there are at most $\beta\in\mathbb{N}$ adversarial nodes, $\beta<K$. We denote the set of adversarial data owners by $\mathcal{A}\in[K]$, and the set of honest data owners by $\mathcal{H}=[K]\setminus\mathcal{A}$. The adversarial data owners try to cause discrepancies in order to mislead honest data owners about the correct values of $f(X_k)$, $k\in\mathcal{H}$. The adversarial data owners sacrifice their chance to use workers for their calculations, and hence, do not want any specific correct computation at the end. The adversarial data owners are free to cooperate with each other, but they do not know the data of the honest nodes. We assume that all workers are honest. Moreover, honest data owners and workers do not know the adversarial data owners. 

We design Vers by using inspirations from \cite{yu2019lagrange}. In the following, we describe the workflow of Vers in detail. The workflow is also shown in Algorithm \ref{Vers algo}. In the first step, data owners transmit their data to the workers. Let $X_{k,n}$ be the message sent from data owner $k\in[K]$ to worker node $n\in[N]$. An honest data owner $k\in\mathcal{H}$ sends $X_k$ to all workers, i.e. $X_{k,n}=X_k$ for all $n\in[N]$. We assume that $X_k, k\in\mathcal{H}$ are chosen independently and uniformly at random from $\mathbb{U}$. An adversarial data owner $k\in\mathcal{A}$ generates at most $v\in\mathbb{N}$ different messages, which we denote by \textit{versions}, $X_k^{(1)},\dots,X_k^{(v)}$, and sends one of them to each worker. In other words, $\{X_{k,n}, n\in[N]\}=\{X_k^{(1)},\dots,X_k^{(v)}\}$, for $k\in\mathcal{A}$. Adversarial data owners can generate up to $v$ messages of their choice and can send any one of them to any worker they choose. The adversarial data owners may also collude so that their aggregate adversarial behavior is as detrimental as possible.

In the second step, worker node $n$ who has received $X_{1,n},\dots,X_{K,n}$ from data owners, computes a linear combination of the $K$ received messages, denoted by $W_n\in\mathbb{U}$, using the encoding algorithm Enc. Let $\gamma_{k,n}\in\mathbb{F}, k\in[K]$ be the $K$ encoding coefficients of worker $n$ that the encoding algorithm Enc determines. Therefore,
    \begin{align}\label{W formula}
        W_n = \sum_{k\in[K]}\gamma_{k,n}X_{k,n},\quad n\in[N].
    \end{align}
We emphasize that encoding in Vers in done in a decentralized fashion, and locally in each worker. This is the main idea of Vers that differentiates it from other coded computing systems that incorporate a central encoder. After local encoding of the received messages, in step \ref{step 3}, worker $n$ computes $f(W_n)$, where $f$ is the target function. 

In addition to $W_n$, and $f(W_n)$, worker node $n$ computes $\mathsf{tag}_n=J(X_{1,n},\dots,X_{K,n})$ in step \ref{step 4}, where $J:\mathbb{U}\rightarrow \mathbb{W}$ is a tag function, and $\mathbb{W}$ is a vector space over $\mathbb{F}$. We will formally introduce tag function in Section \ref{tag functions section}. Intuitively, a tag function compresses its $K$ inputs, and outputs a \textit{fingerprint} of inputs, so that outputs of two different sets of $K$ inputs are almost always different.

In the fifth step, worker node $n$ sends $\mathsf{tag}_n$ and $f(W_n)$ to all data owners. Honest data owners can use tags to identify workers whose received messages were the same in the first step, and therefore, their returned computation results are consistent. In the sixth step, the honest data owner $k\in[K]$ uses $\{f(W_n), n\in\mathcal{T}\}\cup \{\mathsf{tag}_n,n\in\mathcal{T}\}$, where $\mathcal{T}$ is an arbitrary subset of $[N]$ of size $t\in\mathbb{N}$, to recover $f(X_k)$. An example  $\textrm{Vers}(f,N=5,K=3,\beta=1,v=2,\textrm{Enc})$ is shown in Fig \ref{fully decentral fig}. 

\begin{algorithm}
\caption{The workflow of Vers}\label{Vers algo}
\begin{algorithmic}[1]
    \item  The honest data owner $k\in\mathcal{H}$ sends $X_k$ to all workers. The adversarial data owner $k\in\mathcal{A}$ generates at most $v\in\mathbb{N}$ different messages $X_k^{(1)},\dots,X_k^{(v)}$, and sends one of them to each worker. \label{step 1}
    \State 
    Worker node $n\in[N]$ computes $W_n\in\mathbb{U}$ as a linear combination of the $K$ received messages from data owners, using the encoding algorithm Enc, and according to \eqref{W formula}. \label{step 2}
    \State 
    Worker node $n\in[N]$ computes $f(W_n)$. \label{step 3}
    \State Worker node $n\in[N]$ computes $\mathsf{tag}_n=J(X_{1,n},\dots,X_{K,n})$, where $J:\mathbb{U}\rightarrow \mathbb{W}$ is a tag function, and $\mathbb{W}$ is a vector space over $\mathbb{F}$. \label{step 4}
    \State 
    Worker node $n\in[N]$ sends $\mathsf{tag}_n$ and $f(W_n)$ to all data owners. \label{step 5}
    \State 
    An honest data owner $k$ recovers $f(X_k)$ from
    $\{f(W_n), n\in\mathcal{T}\}\cup \{\mathsf{tag}_n,n\in\mathcal{T}\}$, where $\mathcal{T}$ is an arbitrary subset of $[N]$ of size $t\in\mathbb{N}$. \label{step 6}
\end{algorithmic}
\end{algorithm}

\definecolor{ududff}{rgb}{0.30196078431372547,0.30196078431372547,1}
\definecolor{ffzzcc}{rgb}{1,0.6,0.8}

\begin{figure}[h!]
\centering
\begin{subfigure}[t]{0.5\linewidth}
\begin{tikzpicture}[x=0.75pt,y=0.75pt,yscale=-1,xscale=1,scale=0.7]

\draw  [fill={rgb, 255:red, 248; green, 81; blue, 81 }  ,fill opacity=1 ] (225,476.64) .. controls (225,462.83) and (236.19,451.64) .. (250,451.64) .. controls (263.81,451.64) and (275,462.83) .. (275,476.64) .. controls (275,490.45) and (263.81,501.64) .. (250,501.64) .. controls (236.19,501.64) and (225,490.45) .. (225,476.64) -- cycle ;
\draw  [fill={rgb, 255:red, 248; green, 81; blue, 81 }  ,fill opacity=1 ] (326,476.64) .. controls (326,462.83) and (337.19,451.64) .. (351,451.64) .. controls (364.81,451.64) and (376,462.83) .. (376,476.64) .. controls (376,490.45) and (364.81,501.64) .. (351,501.64) .. controls (337.19,501.64) and (326,490.45) .. (326,476.64) -- cycle ;
\draw  [fill={rgb, 255:red, 248; green, 81; blue, 81 }  ,fill opacity=1 ] (125,475.64) .. controls (125,461.83) and (136.19,450.64) .. (150,450.64) .. controls (163.81,450.64) and (175,461.83) .. (175,475.64) .. controls (175,489.45) and (163.81,500.64) .. (150,500.64) .. controls (136.19,500.64) and (125,489.45) .. (125,475.64) -- cycle ;
\draw  [fill={rgb, 255:red, 248; green, 81; blue, 81 }  ,fill opacity=1 ] (526,475.64) .. controls (526,461.83) and (537.19,450.64) .. (551,450.64) .. controls (564.81,450.64) and (576,461.83) .. (576,475.64) .. controls (576,489.45) and (564.81,500.64) .. (551,500.64) .. controls (537.19,500.64) and (526,489.45) .. (526,475.64) -- cycle ;
\draw  [fill={rgb, 255:red, 248; green, 81; blue, 81 }  ,fill opacity=1 ] (426,476.64) .. controls (426,462.83) and (437.19,451.64) .. (451,451.64) .. controls (464.81,451.64) and (476,462.83) .. (476,476.64) .. controls (476,490.45) and (464.81,501.64) .. (451,501.64) .. controls (437.19,501.64) and (426,490.45) .. (426,476.64) -- cycle ;
\draw  [fill={rgb, 255:red, 245; green, 166; blue, 35 }  ,fill opacity=1 ] (225,325.64) .. controls (225,311.83) and (236.19,300.64) .. (250,300.64) .. controls (263.81,300.64) and (275,311.83) .. (275,325.64) .. controls (275,339.45) and (263.81,350.64) .. (250,350.64) .. controls (236.19,350.64) and (225,339.45) .. (225,325.64) -- cycle ;
\draw  [fill={rgb, 255:red, 245; green, 166; blue, 35 }  ,fill opacity=1 ] (325,325.64) .. controls (325,311.83) and (336.19,300.64) .. (350,300.64) .. controls (363.81,300.64) and (375,311.83) .. (375,325.64) .. controls (375,339.45) and (363.81,350.64) .. (350,350.64) .. controls (336.19,350.64) and (325,339.45) .. (325,325.64) -- cycle ;
\draw  [fill={rgb, 255:red, 245; green, 166; blue, 35 }  ,fill opacity=1 ] (424,325.64) .. controls (424,311.83) and (435.19,300.64) .. (449,300.64) .. controls (462.81,300.64) and (474,311.83) .. (474,325.64) .. controls (474,339.45) and (462.81,350.64) .. (449,350.64) .. controls (435.19,350.64) and (424,339.45) .. (424,325.64) -- cycle ;
\draw  [fill={rgb, 255:red, 184; green, 233; blue, 134 }  ,fill opacity=1 ] (250,325.64) -- (250,338.14) .. controls (250,339.38) and (246.64,340.39) .. (242.5,340.39) .. controls (238.36,340.39) and (235,339.38) .. (235,338.14) -- (235,325.64) .. controls (235,324.4) and (238.36,323.39) .. (242.5,323.39) .. controls (246.64,323.39) and (250,324.4) .. (250,325.64) .. controls (250,326.88) and (246.64,327.89) .. (242.5,327.89) .. controls (238.36,327.89) and (235,326.88) .. (235,325.64) ;
\draw  [fill={rgb, 255:red, 184; green, 233; blue, 134 }  ,fill opacity=1 ] (350,325.64) -- (350,338.14) .. controls (350,339.38) and (346.64,340.39) .. (342.5,340.39) .. controls (338.36,340.39) and (335,339.38) .. (335,338.14) -- (335,325.64) .. controls (335,324.4) and (338.36,323.39) .. (342.5,323.39) .. controls (346.64,323.39) and (350,324.4) .. (350,325.64) .. controls (350,326.88) and (346.64,327.89) .. (342.5,327.89) .. controls (338.36,327.89) and (335,326.88) .. (335,325.64) ;
\draw  [fill={rgb, 255:red, 184; green, 233; blue, 134 }  ,fill opacity=1 ] (449,325.64) -- (449,338.14) .. controls (449,339.38) and (445.64,340.39) .. (441.5,340.39) .. controls (437.36,340.39) and (434,339.38) .. (434,338.14) -- (434,325.64) .. controls (434,324.4) and (437.36,323.39) .. (441.5,323.39) .. controls (445.64,323.39) and (449,324.4) .. (449,325.64) .. controls (449,326.88) and (445.64,327.89) .. (441.5,327.89) .. controls (437.36,327.89) and (434,326.88) .. (434,325.64) ;
\draw    (237,360.64) -- (174.19,445.04) ;
\draw [shift={(173,446.64)}, rotate = 306.65999999999997] [color={rgb, 255:red, 0; green, 0; blue, 0 }  ][line width=0.75]    (10.93,-3.29) .. controls (6.95,-1.4) and (3.31,-0.3) .. (0,0) .. controls (3.31,0.3) and (6.95,1.4) .. (10.93,3.29)   ;
\draw    (248,360.64) -- (247.02,439.64) ;
\draw [shift={(247,441.64)}, rotate = 270.71] [color={rgb, 255:red, 0; green, 0; blue, 0 }  ][line width=0.75]    (10.93,-3.29) .. controls (6.95,-1.4) and (3.31,-0.3) .. (0,0) .. controls (3.31,0.3) and (6.95,1.4) .. (10.93,3.29)   ;
\draw    (260,360.64) -- (330.73,447.09) ;
\draw [shift={(332,448.64)}, rotate = 230.71] [color={rgb, 255:red, 0; green, 0; blue, 0 }  ][line width=0.75]    (10.93,-3.29) .. controls (6.95,-1.4) and (3.31,-0.3) .. (0,0) .. controls (3.31,0.3) and (6.95,1.4) .. (10.93,3.29)   ;
\draw    (268,353.64) -- (429.27,447.63) ;
\draw [shift={(431,448.64)}, rotate = 210.23] [color={rgb, 255:red, 0; green, 0; blue, 0 }  ][line width=0.75]    (10.93,-3.29) .. controls (6.95,-1.4) and (3.31,-0.3) .. (0,0) .. controls (3.31,0.3) and (6.95,1.4) .. (10.93,3.29)   ;
\draw    (274,348.64) -- (529.12,442.95) ;
\draw [shift={(531,443.64)}, rotate = 200.29] [color={rgb, 255:red, 0; green, 0; blue, 0 }  ][line width=0.75]    (10.93,-3.29) .. controls (6.95,-1.4) and (3.31,-0.3) .. (0,0) .. controls (3.31,0.3) and (6.95,1.4) .. (10.93,3.29)   ;
\draw [color={rgb, 255:red, 163; green, 157; blue, 157 }  ,draw opacity=1 ][fill={rgb, 255:red, 245; green, 234; blue, 234 }  ,fill opacity=1 ] [dash pattern={on 4.5pt off 4.5pt}]  (330,350.64) -- (178.66,451.53) ;
\draw [shift={(177,452.64)}, rotate = 326.31] [color={rgb, 255:red, 163; green, 157; blue, 157 }  ,draw opacity=1 ][line width=0.75]    (10.93,-3.29) .. controls (6.95,-1.4) and (3.31,-0.3) .. (0,0) .. controls (3.31,0.3) and (6.95,1.4) .. (10.93,3.29)   ;
\draw [color={rgb, 255:red, 163; green, 157; blue, 157 }  ,draw opacity=1 ][fill={rgb, 255:red, 245; green, 234; blue, 234 }  ,fill opacity=1 ] [dash pattern={on 4.5pt off 4.5pt}]  (340,355.64) -- (264.32,442.14) ;
\draw [shift={(263,443.64)}, rotate = 311.19] [color={rgb, 255:red, 163; green, 157; blue, 157 }  ,draw opacity=1 ][line width=0.75]    (10.93,-3.29) .. controls (6.95,-1.4) and (3.31,-0.3) .. (0,0) .. controls (3.31,0.3) and (6.95,1.4) .. (10.93,3.29)   ;
\draw [color={rgb, 255:red, 163; green, 157; blue, 157 }  ,draw opacity=1 ][fill={rgb, 255:red, 245; green, 234; blue, 234 }  ,fill opacity=1 ] [dash pattern={on 4.5pt off 4.5pt}]  (351,357.64) -- (351.98,442.64) ;
\draw [shift={(352,444.64)}, rotate = 269.34000000000003] [color={rgb, 255:red, 163; green, 157; blue, 157 }  ,draw opacity=1 ][line width=0.75]    (10.93,-3.29) .. controls (6.95,-1.4) and (3.31,-0.3) .. (0,0) .. controls (3.31,0.3) and (6.95,1.4) .. (10.93,3.29)   ;
\draw [color={rgb, 255:red, 163; green, 157; blue, 157 }  ,draw opacity=1 ][fill={rgb, 255:red, 245; green, 234; blue, 234 }  ,fill opacity=1 ] [dash pattern={on 4.5pt off 4.5pt}]  (364,357.64) -- (438.7,445.12) ;
\draw [shift={(440,446.64)}, rotate = 229.5] [color={rgb, 255:red, 163; green, 157; blue, 157 }  ,draw opacity=1 ][line width=0.75]    (10.93,-3.29) .. controls (6.95,-1.4) and (3.31,-0.3) .. (0,0) .. controls (3.31,0.3) and (6.95,1.4) .. (10.93,3.29)   ;
\draw [color={rgb, 255:red, 163; green, 157; blue, 157 }  ,draw opacity=1 ][fill={rgb, 255:red, 245; green, 234; blue, 234 }  ,fill opacity=1 ] [dash pattern={on 0.84pt off 2.51pt}]  (470,352.64) -- (543.7,439.12) ;
\draw [shift={(545,440.64)}, rotate = 229.56] [color={rgb, 255:red, 163; green, 157; blue, 157 }  ,draw opacity=1 ][line width=0.75]    (10.93,-3.29) .. controls (6.95,-1.4) and (3.31,-0.3) .. (0,0) .. controls (3.31,0.3) and (6.95,1.4) .. (10.93,3.29)   ;
\draw [color={rgb, 255:red, 163; green, 157; blue, 157 }  ,draw opacity=1 ][fill={rgb, 255:red, 245; green, 234; blue, 234 }  ,fill opacity=1 ] [dash pattern={on 0.84pt off 2.51pt}]  (457,359.64) -- (455.05,443.64) ;
\draw [shift={(455,445.64)}, rotate = 271.33] [color={rgb, 255:red, 163; green, 157; blue, 157 }  ,draw opacity=1 ][line width=0.75]    (10.93,-3.29) .. controls (6.95,-1.4) and (3.31,-0.3) .. (0,0) .. controls (3.31,0.3) and (6.95,1.4) .. (10.93,3.29)   ;
\draw [color={rgb, 255:red, 163; green, 157; blue, 157 }  ,draw opacity=1 ][fill={rgb, 255:red, 245; green, 234; blue, 234 }  ,fill opacity=1 ] [dash pattern={on 0.84pt off 2.51pt}]  (433,354.64) -- (368.17,445.02) ;
\draw [shift={(367,446.64)}, rotate = 305.65999999999997] [color={rgb, 255:red, 163; green, 157; blue, 157 }  ,draw opacity=1 ][line width=0.75]    (10.93,-3.29) .. controls (6.95,-1.4) and (3.31,-0.3) .. (0,0) .. controls (3.31,0.3) and (6.95,1.4) .. (10.93,3.29)   ;
\draw [color={rgb, 255:red, 163; green, 157; blue, 157 }  ,draw opacity=1 ][fill={rgb, 255:red, 245; green, 234; blue, 234 }  ,fill opacity=1 ] [dash pattern={on 0.84pt off 2.51pt}]  (426,346.64) -- (277.64,450.49) ;
\draw [shift={(276,451.64)}, rotate = 325.01] [color={rgb, 255:red, 163; green, 157; blue, 157 }  ,draw opacity=1 ][line width=0.75]    (10.93,-3.29) .. controls (6.95,-1.4) and (3.31,-0.3) .. (0,0) .. controls (3.31,0.3) and (6.95,1.4) .. (10.93,3.29)   ;
\draw [color={rgb, 255:red, 163; green, 157; blue, 157 }  ,draw opacity=1 ][fill={rgb, 255:red, 245; green, 234; blue, 234 }  ,fill opacity=1 ] [dash pattern={on 0.84pt off 2.51pt}]  (423,338.64) -- (184.79,457.75) ;
\draw [shift={(183,458.64)}, rotate = 333.43] [color={rgb, 255:red, 163; green, 157; blue, 157 }  ,draw opacity=1 ][line width=0.75]    (10.93,-3.29) .. controls (6.95,-1.4) and (3.31,-0.3) .. (0,0) .. controls (3.31,0.3) and (6.95,1.4) .. (10.93,3.29)   ;
\draw [color={rgb, 255:red, 163; green, 157; blue, 157 }  ,draw opacity=1 ][fill={rgb, 255:red, 245; green, 234; blue, 234 }  ,fill opacity=1 ] [dash pattern={on 4.5pt off 4.5pt}]  (375,350.64) -- (522.37,454.49) ;
\draw [shift={(524,455.64)}, rotate = 215.17000000000002] [color={rgb, 255:red, 163; green, 157; blue, 157 }  ,draw opacity=1 ][line width=0.75]    (10.93,-3.29) .. controls (6.95,-1.4) and (3.31,-0.3) .. (0,0) .. controls (3.31,0.3) and (6.95,1.4) .. (10.93,3.29)   ;

\draw (172,377.64) node [anchor=north west][inner sep=0.75pt]   [align=left] {$\displaystyle X_{1}^{( 1)}$};
\draw (216,394.64) node [anchor=north west][inner sep=0.75pt]   [align=left] {$\displaystyle X_{1}^{( 1)}$};
\draw (265,401.64) node [anchor=north west][inner sep=0.75pt]   [align=left] {$\displaystyle X_{1}^{( 2)}$};
\draw (316,395.64) node [anchor=north west][inner sep=0.75pt]   [align=left] {$\displaystyle X_{1}^{( 2)}$};
\draw (401,372.64) node [anchor=north west][inner sep=0.75pt]   [align=left] {$\displaystyle X_{1}^{( 2)}$};
\end{tikzpicture}
\caption{The orange nodes (top row) are data owners, and the red nodes (bottom row) are workers. Data owners send their data to workers. Here, the first data owner is an adversary and sends two different messages $X_1^{(1)}$ and $X_1^{(2)}$ to workers. }
\label{fullt decentral fig a}
\end{subfigure}    
\hfill
\begin{subfigure}[t]{0.5\linewidth}
\centering
\begin{tikzpicture}[x=0.75pt,y=0.75pt,yscale=-1,xscale=1,scale=0.7]

\draw  [fill={rgb, 255:red, 248; green, 81; blue, 81 }  ,fill opacity=1 ] (225,725.64) .. controls (225,711.83) and (236.19,700.64) .. (250,700.64) .. controls (263.81,700.64) and (275,711.83) .. (275,725.64) .. controls (275,739.45) and (263.81,750.64) .. (250,750.64) .. controls (236.19,750.64) and (225,739.45) .. (225,725.64) -- cycle ;
\draw  [fill={rgb, 255:red, 248; green, 81; blue, 81 }  ,fill opacity=1 ] (326,725.64) .. controls (326,711.83) and (337.19,700.64) .. (351,700.64) .. controls (364.81,700.64) and (376,711.83) .. (376,725.64) .. controls (376,739.45) and (364.81,750.64) .. (351,750.64) .. controls (337.19,750.64) and (326,739.45) .. (326,725.64) -- cycle ;
\draw  [fill={rgb, 255:red, 248; green, 81; blue, 81 }  ,fill opacity=1 ] (125,724.64) .. controls (125,710.83) and (136.19,699.64) .. (150,699.64) .. controls (163.81,699.64) and (175,710.83) .. (175,724.64) .. controls (175,738.45) and (163.81,749.64) .. (150,749.64) .. controls (136.19,749.64) and (125,738.45) .. (125,724.64) -- cycle ;
\draw  [fill={rgb, 255:red, 248; green, 81; blue, 81 }  ,fill opacity=1 ] (526,724.64) .. controls (526,710.83) and (537.19,699.64) .. (551,699.64) .. controls (564.81,699.64) and (576,710.83) .. (576,724.64) .. controls (576,738.45) and (564.81,749.64) .. (551,749.64) .. controls (537.19,749.64) and (526,738.45) .. (526,724.64) -- cycle ;
\draw  [fill={rgb, 255:red, 248; green, 81; blue, 81 }  ,fill opacity=1 ] (426,725.64) .. controls (426,711.83) and (437.19,700.64) .. (451,700.64) .. controls (464.81,700.64) and (476,711.83) .. (476,725.64) .. controls (476,739.45) and (464.81,750.64) .. (451,750.64) .. controls (437.19,750.64) and (426,739.45) .. (426,725.64) -- cycle ;
\draw  [fill={rgb, 255:red, 245; green, 166; blue, 35 }  ,fill opacity=1 ] (225,574.64) .. controls (225,560.83) and (236.19,549.64) .. (250,549.64) .. controls (263.81,549.64) and (275,560.83) .. (275,574.64) .. controls (275,588.45) and (263.81,599.64) .. (250,599.64) .. controls (236.19,599.64) and (225,588.45) .. (225,574.64) -- cycle ;
\draw  [fill={rgb, 255:red, 245; green, 166; blue, 35 }  ,fill opacity=1 ] (325,574.64) .. controls (325,560.83) and (336.19,549.64) .. (350,549.64) .. controls (363.81,549.64) and (375,560.83) .. (375,574.64) .. controls (375,588.45) and (363.81,599.64) .. (350,599.64) .. controls (336.19,599.64) and (325,588.45) .. (325,574.64) -- cycle ;
\draw  [fill={rgb, 255:red, 245; green, 166; blue, 35 }  ,fill opacity=1 ] (424,574.64) .. controls (424,560.83) and (435.19,549.64) .. (449,549.64) .. controls (462.81,549.64) and (474,560.83) .. (474,574.64) .. controls (474,588.45) and (462.81,599.64) .. (449,599.64) .. controls (435.19,599.64) and (424,588.45) .. (424,574.64) -- cycle ;
\draw  [fill={rgb, 255:red, 184; green, 233; blue, 134 }  ,fill opacity=1 ] (250,574.64) -- (250,587.14) .. controls (250,588.38) and (246.64,589.39) .. (242.5,589.39) .. controls (238.36,589.39) and (235,588.38) .. (235,587.14) -- (235,574.64) .. controls (235,573.4) and (238.36,572.39) .. (242.5,572.39) .. controls (246.64,572.39) and (250,573.4) .. (250,574.64) .. controls (250,575.88) and (246.64,576.89) .. (242.5,576.89) .. controls (238.36,576.89) and (235,575.88) .. (235,574.64) ;
\draw  [fill={rgb, 255:red, 184; green, 233; blue, 134 }  ,fill opacity=1 ] (350,574.64) -- (350,587.14) .. controls (350,588.38) and (346.64,589.39) .. (342.5,589.39) .. controls (338.36,589.39) and (335,588.38) .. (335,587.14) -- (335,574.64) .. controls (335,573.4) and (338.36,572.39) .. (342.5,572.39) .. controls (346.64,572.39) and (350,573.4) .. (350,574.64) .. controls (350,575.88) and (346.64,576.89) .. (342.5,576.89) .. controls (338.36,576.89) and (335,575.88) .. (335,574.64) ;
\draw  [fill={rgb, 255:red, 184; green, 233; blue, 134 }  ,fill opacity=1 ] (449,574.64) -- (449,587.14) .. controls (449,588.38) and (445.64,589.39) .. (441.5,589.39) .. controls (437.36,589.39) and (434,588.38) .. (434,587.14) -- (434,574.64) .. controls (434,573.4) and (437.36,572.39) .. (441.5,572.39) .. controls (445.64,572.39) and (449,573.4) .. (449,574.64) .. controls (449,575.88) and (445.64,576.89) .. (441.5,576.89) .. controls (437.36,576.89) and (434,575.88) .. (434,574.64) ;
\draw    (168,693.56) -- (226.88,606.22) ;
\draw [shift={(228,604.56)}, rotate = 483.99] [color={rgb, 255:red, 0; green, 0; blue, 0 }  ][line width=0.75]    (10.93,-3.29) .. controls (6.95,-1.4) and (3.31,-0.3) .. (0,0) .. controls (3.31,0.3) and (6.95,1.4) .. (10.93,3.29)   ;
\draw [color={rgb, 255:red, 126; green, 124; blue, 124 }  ,draw opacity=1 ][fill={rgb, 255:red, 245; green, 234; blue, 234 }  ,fill opacity=1 ] [dash pattern={on 4.5pt off 4.5pt}]  (335,692.56) -- (255.39,610) ;
\draw [shift={(254,608.56)}, rotate = 406.03999999999996] [color={rgb, 255:red, 126; green, 124; blue, 124 }  ,draw opacity=1 ][line width=0.75]    (10.93,-3.29) .. controls (6.95,-1.4) and (3.31,-0.3) .. (0,0) .. controls (3.31,0.3) and (6.95,1.4) .. (10.93,3.29)   ;
\draw    (177,699.56) -- (328.29,607.6) ;
\draw [shift={(330,606.56)}, rotate = 508.71] [color={rgb, 255:red, 0; green, 0; blue, 0 }  ][line width=0.75]    (10.93,-3.29) .. controls (6.95,-1.4) and (3.31,-0.3) .. (0,0) .. controls (3.31,0.3) and (6.95,1.4) .. (10.93,3.29)   ;
\draw    (181,709.56) -- (428.16,605.34) ;
\draw [shift={(430,604.56)}, rotate = 517.14] [color={rgb, 255:red, 0; green, 0; blue, 0 }  ][line width=0.75]    (10.93,-3.29) .. controls (6.95,-1.4) and (3.31,-0.3) .. (0,0) .. controls (3.31,0.3) and (6.95,1.4) .. (10.93,3.29)   ;
\draw [color={rgb, 255:red, 126; green, 124; blue, 124 }  ,draw opacity=1 ][fill={rgb, 255:red, 245; green, 234; blue, 234 }  ,fill opacity=1 ] [dash pattern={on 4.5pt off 4.5pt}]  (357,692.56) -- (358.95,612.56) ;
\draw [shift={(359,610.56)}, rotate = 451.4] [color={rgb, 255:red, 126; green, 124; blue, 124 }  ,draw opacity=1 ][line width=0.75]    (10.93,-3.29) .. controls (6.95,-1.4) and (3.31,-0.3) .. (0,0) .. controls (3.31,0.3) and (6.95,1.4) .. (10.93,3.29)   ;
\draw [color={rgb, 255:red, 126; green, 124; blue, 124 }  ,draw opacity=1 ][fill={rgb, 255:red, 245; green, 234; blue, 234 }  ,fill opacity=1 ] [dash pattern={on 4.5pt off 4.5pt}]  (373,699.56) -- (438.8,612.16) ;
\draw [shift={(440,610.56)}, rotate = 486.97] [color={rgb, 255:red, 126; green, 124; blue, 124 }  ,draw opacity=1 ][line width=0.75]    (10.93,-3.29) .. controls (6.95,-1.4) and (3.31,-0.3) .. (0,0) .. controls (3.31,0.3) and (6.95,1.4) .. (10.93,3.29)   ;

\draw (140,619.64) node [anchor=north west][inner sep=0.75pt]   [align=left] {$\displaystyle  \begin{array}{{>{\displaystyle}l}}
f( W_{1})\\
\text{tag}_{1}
\end{array}$};
\draw (411,642.64) node [anchor=north west][inner sep=0.75pt]   [align=left] {$\displaystyle  \begin{array}{{>{\displaystyle}l}}
\textcolor[rgb]{0.45,0.42,0.42}{f( W_{3})}\\
\textcolor[rgb]{0.45,0.42,0.42}{\text{tag}_{3}}
\end{array}$};
\end{tikzpicture}
\caption{Workers apply $f$ on coded input, calculate tags, and return the results to data owners.}
\label{fully decentral fig b}
\end{subfigure}
\caption{$\textrm{Vers}(f,N=5,K=3,\beta=1,v=2,\textrm{Enc})$ is shown as an example (only some of the messages between data owners and workers are shown). The leftmost data owner is adversarial.}
\label{fully decentral fig}
\end{figure}
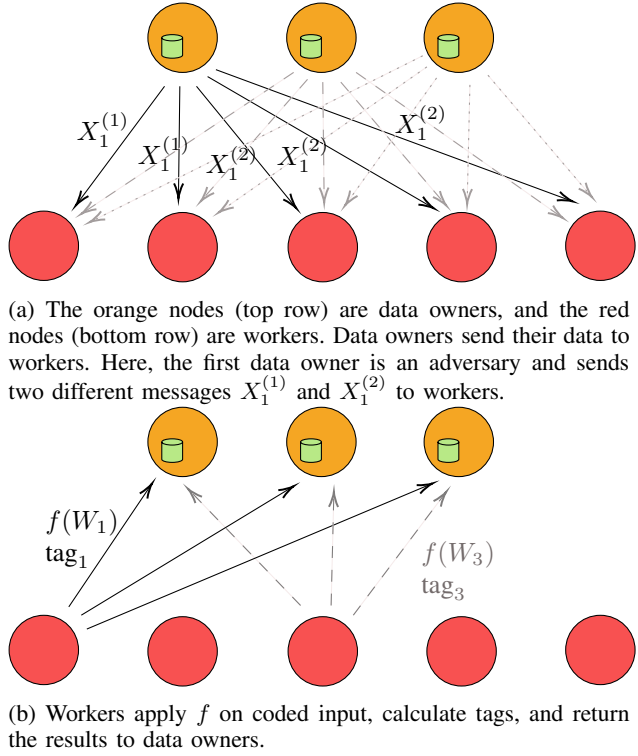

\begin{remark}
Note that each worker node calculates a single tag in the fourth step of Vers. Therefore, the total communication of Vers is $O(N)$, i.e. linear in $N$. Data owners use the indirect information from these tags to partition workers into sets that have received the same adversarial data from the adversarial data owners. Another possible design is having each worker calculate $K$ tags, one for each data received from the $K$ data owners. The direct information from such tags reveals the identities of adversarial data owners for honest data owners. However, such design results in $O(KN)$ communication load, and hinders scalability when $K$ grows. As previously stated in Section \ref{introduction section}, one of the main purposes of Vers is to compensate for scalability bottlenecks of single master systems. Therefore, in Vers, we incorporate the communication efficient single tag method, to allow $K$ to scale.
Another intermediate approach is to have each worker calculate a constant number of tags of the $K$ received messages. This method probably gives more information about the adversarial behaviour to honest data owners. However, due to its complexity, we do not address this approach in this paper, and leave it for future research.
\end{remark}
The formal definition of fundamental limit is as following.
\begin{definition}[The fundamental limit of Vers]\label{The fundamental limit of Vers}
The fundamental limit of $\textrm{Vers}(f,N,K,\beta,v,\textrm{Enc})$, which we denote by $t^*(f,N,K,\beta,v,\textrm{Enc})$, is the minimum $t$ required such that $f(X_k), k\in\mathcal{H}$ can be correctly computed from $\{f(W_n), n\in\mathcal{T}\}$, and $\{\mathsf{tag}_n, n\in\mathcal{T}\}$, under any adversarial behavior, where $\mathcal{T}$ is any arbitrary subset of $[N]$ of size $t$.
\end{definition}
No matter how the adversaries choose and distribute their $v$ different messages $X_k^{(1)},\dots,X_k^{(v)}, k\in\mathcal{A}$ to workers, $f(X_k), k\in\mathcal{H}$ should be correctly extractable from $\{f(W_n), n\in\mathcal{T}\}$ and $\{\mathsf{tag}_n, n\in\mathcal{T}\}$. This should hold for any subset of workers of size $t^*(f,N,K,\beta,v,\textrm{Enc})$, for the system to be robust against stragglers and node failures which is common in distributed systems. Note that we do not set any requirement on being able to find all or any of $f(X_k^{(1)}),\dots,f(X_k^{(v)})$, $k\in\mathcal{A}$, from $\{f(W_n), n\in\mathcal{T}\}$, and $\{\mathsf{tag}_n, n\in\mathcal{T}\}$. This is because the adversaries give up their chance of using workers for computation, to cause inconsistency in the system. In fact, there exist adversarial behaviors for which it is not possible to find all of $f(X_k^{(1)}),\dots,f(X_k^{(v)})$. For example, consider the case where all adversarial data owners $k\in\mathcal{A}$ send $X_k^{(1)}$ to workers $1,\dots,N-1$, and $X_k^{(2)}$ to worker $N$. In this case, there is not enough information to find $f(X_k^{(2)}), k\in\mathcal{A}$ from any set of equations. Therefore, in order to avoid complex scenarios, we define the fundamental limit based on the requirement to find only $f(X_k), k\in\mathcal{H}$. In Section \ref{main result section}, we characterize the fundamental limit of $\textrm{Vers}(f,N,K,\beta,v,\textrm{Lagrange})$. In other words, we choose Lagrange encoding as Enc algorithm in $\textrm{Vers}(f,N,K,\beta,v,\textrm{Enc})$, set $\gamma_{k,n}$ in \eqref{W formula} accordingly, and then find $t^*(f,N,K,\beta,v,\textrm{Lagrange})$.

\begin{remark}
The challenging part of Vers is the last step where the data owners decode the results of workers because the adversarial data owners had injected inconsistencies in the received data of workers. In this work, data owners deal with inconsistencies in decoding and use tags to determine subsets of workers whose results are consistent. Another approach is to resolve the inconsistencies right in the first step, by using a \textit{reliable broadcast} algorithm, e.g. \cite{bracha1987asynchronous}, for the message of each data owner, instead of having data owners simply send messages to workers. This approach requires communication between workers to help them reach a consensus on the received data from each data owner. 
\end{remark}

\section{Applications of Vers}
The motivation for studying Vers is that we can use it to model a variety of emerging computing systems in the presence of adversaries, such as internet of things networks and blockchains. In this section, we elaborate on some applications of Vers.

In an IoT network, since sensors and devices are resource-constrained, they cannot process the data generated in them. Therefore, they need to offload the computations to external nodes, i.e. workers in Vers, to calculate complex target functions of their data. Thus, Vers can easily be used to model an IoT network of resource-constrained devices. Data owners in Vers are equivalent to devices in IoT network. Workers in Vers are equivalent to external computational nodes in IoT network. The target function in Vers is equivalent to any processing function that devices in IoT network require. For example, suppose that in agriculture, IoT is used for smart irrigation. Some devices need to know the efficient amount of water needed, which is a function of soil moisture, weather conditions and ambient temperature, crop type, and possibly more features. Such function can be modeled by target function in Vers.

The adversarial data owners in Vers are equivalent to adversarial IoT devices, which include tampered and infected ones (e.g. see \cite{cloudflare.com_2023}). The adversarial devices may deliberately send inconsistent data to computational nodes, in order to mislead the decision making process of other devices. For example, suppose that IoT is used to deploy smart home, and some devices have been tampered with and try to affect the correct functionality of the smart lock on the door. Since IoT devices are resource-constrained, the upper limit on the different messages of adversarial data owners in Vers applies to them.
The fundamental limit of the Vers model of an IoT network specifies the number of external computational nodes that are required for the system to be robust against the adversarial devices.

Vers can also be used to model blockchains, and in particular, sharded blockchains. In a blockchain network of $N\in\mathbb{N}$ nodes, nodes collectively process transactions, produce blocks of transactions, distribute the blocks, and also validate them. Blockchain is indeed the agreed-upon chain of blocks, and whenever nodes reach consensus on a new block, that block is appended to the chain. In a fully replicated blockchain system, such as Bitcoin, each node repeats what other nodes do. Such decentralized approach provides high security against adversaries, because no node relies on another node, but its throughput (the number of confirmed transactions per second) fails to scale due to its replication. Over the past few years, there has been extensive research into how to make blockchains scalable  \cite{sanka2021systematic,nasir2022scalable}. A notable line of research in this regard, is \textit{sharding} which is inspired by the idea of parallel computing. Recently, Ethereum, the most popular blockchain platform, introduced \textit{Danksharding} \cite{ethereum.org_2023}, which is a novel sharding protocol, and is going to deploy it in near future.  

In vanilla sharding, e.g. \cite{kokoris2018omniledger,zamani2018rapidchain}, blockchain nodes are divided into groups called shards, say $M\in\mathbb{N}$ shards. shards run in parallel, such that each shard of $\frac{N}{M}$ nodes has a local chain and produces its own blocks. The allocation of nodes to shards can be random, or other methods. The purpose of sharding is to increase the throughput by $M$ times. However, a shard of $\frac{N}{M}$ nodes, and thus the whole sharded system, is more vulnerable to security attacks in comparison to $N$ nodes. For example, carrying out a 51\% attack on a shard of $\frac{N}{M}$ nodes is easier for an adversary, compared to a system of $N$ nodes.
In order to preserve security and scale the blockchain simultaneously, the concept of coded sharding was first introduced in \cite{li2020polyshard} and named \textit{PolyShard}.

Unlike vanilla sharding where nodes in a shard work on their own blocks, in Ployshard, all nodes in all shards work on coded blocks. Polyshard works as following. In round $t\in\mathbb{N}$, each shard $m\in[M]$ needs to validate its new block $B_m^{t}$. Shards broadcast their blocks to all nodes. Each node $n$ calculates a coded block $\Tilde{B}^{t}_n$ from the received blocks $B_1^{t},\dots,B_M^{t}$, using Lagrange coding. To put it more clearly, each node $n$ makes a Lagrange ploynomial whose coefficients are constructed from $B_1^{t},\dots,B_M^{t}$, and then evaluate that in a point $\alpha_n$ to obtain $\Tilde{B}^{t}_n$.
Then, nodes apply the block verification function\footnote{Block verification function is determined by the consensus algorithm of the blockchain system.} $g$ on coded blocks that they calculated previously, instead of applying that $M$ times on all $M$ blocks separately. Validation of blocks in round $t$ requires data of blocks in previous rounds $1,2,\dots,t-1$. Each node $n$ stores coded chain of blocks in each round, evaluated at point $\alpha_n$, rather than only blocks of its own shard. Therefore, it can use its stored coded local chain until round $t-1$ when verifying its coded block in round $t$. Nodes broadcast their results of verification in the network. In this step, adversaries may broadcast arbitrary erroneous values. Nodes can obtain the verification of $B_1^{t},\dots,B_M^{t}$ by decoding the verifications of coded blocks, and the adversaries' effect translates into errors in decoding data. The main result of \cite{li2020polyshard} is that the throughput and security\footnote{The number of adversarial nodes that the system is resilient against.} of this scheme scales linearly with $N$, while the storage requirement of nodes does not need to scale.

The adversarial model in \cite{li2020polyshard} is limited to adversaries that broadcast an incorrect value instead of the verification result of their coded block. Even in that case, the adversaries send the same incorrect value to all nodes, and the consistency of the system is maintained. However, in a decentralized system like blockchain, adversarial nodes can freely deviate from the protocol. In \cite{khooshemehr2021discrepancy}, we introduced the discrepancy attack on PolyShard, in which the adversaries take control of some shards and adversarial shards disseminate different blocks to different nodes. This causes error in coded blocks of nodes and disrupts the validation process, and breaks the scalability of security in Polyshard.

Vers is a comprehensive system that can model a coded sharding blockchain in the presence of adversaries as following. Data owner $k\in[K]$ in Vers is equivalent to the representative of shard $k$ in blockchain that propagates the new block of its shard in the network. Note that under the discrepancy attack, the representative of shard $k$ may be adversarial. Data $X_k$ of data owner $k$ in Vers is equivalent to block $B_k$ of shard $k$ in the current round. Workers in Vers are equivalent to nodes of the network in blockchain. The target function $f$ in Vers is equivalent to block verification function $g$ in blockchain. In Vers, data owner $k$ needs $f(X_k)$, and in blockchain, nodes of shard $k$ need the result of the verification function of $B_k$.
The fundamental limit of the Vers model of a coded sharding blockchain specifies the number of nodes that are required for the system to be robust against the adversarial shards.

\section{Main result}\label{main result section}
The following theorem states the main result.
\begin{theorem}\label{main result}
The fundamental limit of $\textrm{Vers}(f,N,K,\beta,v,\textrm{Lagrange})$ 
is $t^*=v^{\beta}d(K-1)+1$.
\end{theorem}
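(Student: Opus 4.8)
The plan is to prove the two directions separately: a converse showing $t^* \ge v^\beta d(K-1)+1$, and an achievability argument showing that $t = v^\beta d(K-1)+1$ suffices. Throughout, recall that with Lagrange encoding each worker $n$ holds a distinct evaluation point $\alpha_n$, and $W_n = u(\alpha_n)$ where $u$ is the degree-$(K-1)$ polynomial interpolating the $K$ received messages $X_{1,n},\dots,X_{K,n}$ at $K$ fixed base points $\beta_1,\dots,\beta_K$; consequently $f(W_n) = (f\circ u)(\alpha_n)$ is the evaluation of a polynomial of degree at most $d(K-1)$. The key structural point, which the tag-function section should already give us, is that tags let the honest data owners \emph{exactly} partition any received set $\mathcal T$ of workers into classes on which all $K$ messages agree; so within each class the workers evaluate the \emph{same} polynomial $f\circ u$ of degree $\le d(K-1)$ at distinct points.

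For the achievability direction, I would argue as follows. Fix an honest data owner $k$ and a set $\mathcal T$ of $t = v^\beta d(K-1)+1$ workers. The adversaries collectively produce at most $v^\beta$ distinct ``global version vectors'' (one coordinate per adversarial owner), so the tag-induced partition of $\mathcal T$ has at most $v^\beta$ classes; by pigeonhole at least one class $\mathcal C$ has $|\mathcal C| \ge \lceil t / v^\beta \rceil \ge d(K-1)+1$. On $\mathcal C$ all workers share one consistent input tuple that includes the true $X_j$ for every honest $j$, in particular $X_k$, so $\{f(W_n): n\in\mathcal C\}$ are $d(K-1)+1$ evaluations of a single polynomial of degree $\le d(K-1)$ at distinct points, hence interpolation recovers $f\circ u$ on $\mathcal C$ and therefore its value at the base point $\beta_k$, which is $f(X_k)$. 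The honest owner need only locate one class of size $\ge d(K-1)+1$ and decode from it; since $t = v^\beta d(K-1)+1$ forces such a class to exist regardless of how the adversaries distribute versions, this shows $t^* \le v^\beta d(K-1)+1$.

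For the converse I would exhibit an adversarial strategy under which $t-1 = v^\beta d(K-1)$ workers are insufficient. Take $\beta$ adversarial owners each using all $v$ versions, so there are exactly $v^\beta$ global version vectors; partition $N$ workers into $v^\beta$ groups and assign each group a distinct version vector, balancing sizes so each group has $\le d(K-1)$ workers among a chosen bad set $\mathcal T$ of size $v^\beta d(K-1)$. Within each group the workers evaluate a polynomial of degree $\le d(K-1)$ at $\le d(K-1)$ points, which underdetermines that polynomial's value at $\beta_k$; moreover the tags only reveal the partition, not which class carries the ``honest-consistent'' data — and since adversaries can choose their versions so that two different completions of the honest data's unknown value are each consistent with all observed $(f(W_n),\mathsf{tag}_n)$, the honest owner cannot disambiguate $f(X_k)$. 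I would make this precise by constructing, for a given transcript, two distinct candidate values of $X_k$ (using that $f$ has degree $d$ and that $d(K-1)$ evaluations leave a one-parameter family) that yield identical worker outputs and tags on $\mathcal T$, so no decoder can succeed; this gives $t^* > v^\beta d(K-1)$.

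The main obstacle is the converse, specifically making the indistinguishability argument airtight: one must ensure the two candidate inputs genuinely produce the \emph{same} tags on every worker in $\mathcal T$ (not merely the same partition), which forces care about what the tag function reveals and requires leveraging its compression/collision structure from the tag-functions section, and one must also confirm the construction respects the constraint that honest owners send a single fixed $X_k$ to all workers while only adversaries branch. Handling the interaction between the degree-$d$ nonlinearity of $f$ and the Lagrange interpolation — ensuring that distinct polynomials $f\circ u$ and $f\circ u'$ of degree $\le d(K-1)$ can agree on exactly the $d(K-1)$ chosen points while differing at $\beta_k$ — is the delicate computational core, and I expect it to rely on a dimension-counting or Vandermonde non-degeneracy argument over $\mathbb F$ (implicitly assuming $|\mathbb F|$ large enough, as Lagrange coding already requires).
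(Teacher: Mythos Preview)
Your achievability argument is essentially the paper's: pigeonhole on the at most $v^\beta$ tag-classes forces one class of size $\ge d(K-1)+1$, and Lagrange interpolation on that class recovers $f(X_k)$. Nothing to add there.

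Your converse, however, has a real gap, and it is exactly the obstacle you flag but do not resolve. You propose to exhibit two candidate values of $X_k$ that produce identical worker outputs \emph{and identical tags} on $\mathcal T$. But $X_k$ is sent by an honest owner to \emph{every} worker, so it enters $\mathsf{tag}_n=J(X_{1,n},\dots,X_{K,n})$ for all $n$. By the tag property, changing $X_k$ (with the adversarial inputs fixed or chosen independently of the honest data) changes every tag with probability $1-\epsilon$; matching all $v^\beta d(K-1)$ tags simultaneously would require a massive coincidence in $J$, not something you can force. So the indistinguishability-of-two-$X_k$ route is blocked by the very object you are relying on for the partition.

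The paper's converse avoids this entirely by keeping \emph{all} data fixed and locating the ambiguity elsewhere: the decoder sees the partition $\mathcal T_1,\dots,\mathcal T_{v^\beta}$ but does not know which version vector $\mathbf v_j$ each $\mathcal T_i$ corresponds to. With exactly $d(K-1)$ workers in every class, each block system $\mathbf y_i=\mathbf Q_i\,\mathbf u'_{\mathbf v_{\Pi(i)}}$ is underdetermined for any permutation $\Pi$ of the labels. The substance of the argument is then algebraic: using the relation matrix $\mathbf P$ (left null space of the characteristic matrix $\mathbf M$) and the notion of an \emph{effective} permutation, one shows that for an effective $\Pi$ the alternative solution $\mathbf U'$ cannot agree with the true $\mathbf U$ at the honest evaluation points $\omega_k$, because agreement would force $\mathbf u_{\mathbf v_i}=\mathbf u'_{\mathbf v_{\Pi(i)}}$ (via a full-rank Vandermonde block), hence $\mathbf P^{\Pi}\mathbf U=\mathbf 0$, contradicting effectiveness of $\Pi$. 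The missing idea in your plan is precisely this label-permutation ambiguity and the $\mathbf M/\mathbf P$ machinery that turns ``the decoder cannot know which class carries which version vector'' into ``the decoder computes a provably wrong $f(X_k)$ under some consistent interpretation.'' Your one-parameter-family intuition per class is correct, but you need to tie the $v^\beta$ free parameters together through the structure of the coefficient vector, not through varying $X_k$.
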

This result implies that $N\ge v^{\beta}d(K-1)+1$ should hold. Otherwise, even when there is no straggler or faulty worker, and all workers send their computations back to data owners, data owners would not be able to recover $f(X_k), k\in\mathcal{H}$.

Since the fundamental limit is the minimum number of workers whose results are enough to correctly recover $f(X_k), k\in\mathcal{H}$, two proofs are needed for Theorem \ref{main result}. Firstly, in converse proof, we need to prove that $t^*$ is truly the minimum, which means there should exist a particular adversarial behaviour and a set of $t^*-1$ workers whose results do not determine $f(X_k), k\in\mathcal{H}$ uniquely. But this is not as simple as showing a system of equations is underdetermined. Note that the honest data owners do not know the adversarial behaviour, so they do not know the correct way to form equations from the results of workers, even with the help of tags. Consequently, honest data owners need to consider every possible system of equations from the results of workers. Moreover, since Lagrange coding is used, $f(X_k), k\in\mathcal{H}$ are not directly the unknowns of the equations of workers, rather, they are linear combinations of the unknowns. In the converse proof, we show that there exists a possible interpretation of the results of workers that leads to at least one incorrect output for $f(X_k), k\in\mathcal{H}$. We provide the converse proof in Subsection \ref{converse proof}.

Secondly, in the achievable proof, we need to prove that any set of $t^*$ workers is enough to correctly recover $f(X_k), k\in\mathcal{H}$. Again, this should hold for any adversarial behavior, despite the existence of many possible interpretations of the results of workers. We provide the achievable proof in Subsection \ref{achievable proof}.

\begin{remark}\label{baseline remark}
The fundamental limit of $\textrm{Vers}(f,N,K,\beta=0,v,\textrm{Enc})$, i.e. when all data owners are honest, is not different from the fundamental limit of a similar system in which a central entity encodes the data and sends the encoded data to workers, because the encoded data in workers would be the same. Indeed, the challenge in Vers is due to the adversarial data owners that inject contradictory data in the system, and the resulting error that propagates through local encodings in workers. The coded computing system studied in \cite{yu2019lagrange} includes a central encoder, but it is essentially similar to $\textrm{Vers}(f,N,K,\beta=0,v,\textrm{Lagrange})$, and its so called \textit{recovery threshold} is equivalent to the fundamental limit of $\textrm{Vers}(f,N,K,\beta=0,v,\textrm{Lagrange})$. Therefore, using the result of \cite{yu2019lagrange}, we know that the fundamental limit of $\textrm{Vers}(f,N,K,\beta=0,v,\textrm{Lagrange})$ is $t^*(f,N,K,\beta=0,v,\textrm{Lagrange})=d(K-1)+1$, where $d$ is the degree of the target polynomial function $f$.
\end{remark}

\section{Tag functions}\label{tag functions section}
In Algorithm \ref{Vers algo}, step \ref{step 2}, we explained that workers compute a \textit{tag} of their received messages. In this section, we introduce the concept of tags, and prove the existence of a tag function that can be deployed in $\textrm{Vers}(f,N,K,\beta,v,\textrm{Enc})$.

Recall that each adversarial data owner $k\in\mathcal{A}$ distributes $X_k^{(1)},\dots,X_k^{(v)}$ among workers. The honest data owners that receive $f(W_n)$ from a worker $n\in[N]$ after step \ref{step 5} of Algorithm \ref{Vers algo}, do not know which one of the $v$ messages of the adversaries this worker has received and has used in $W_n$. Therefore, honest data owners do not know which workers have used the same set of input messages in their calculations. The purpose of tag is to inform the honest data owners about the data used by workers. Tags allow data owners to partition $[N]$ into some sets, at most $v^{\beta}$ sets, where each set contains workers that have received the same messages from all data owners in the first step. 

Let us denote the concatenation of all $X_k, k\in\mathcal{H}$, i.e. the data of the honest data owners, with $X_{\mathcal{H}}$. Also let $X_{\mathcal{A},n}$ be the concatenation of all $X_{k,n}, k\in\mathcal{A}$, i.e. the messages of the adversarial data owners received by worker $n\in[N]$. 
Worker $n\in[N]$ applies the tag function $J$ on $X_{1,n},\dots,X_{K,n}$, but since $\{X_{1,n},\dots,X_{K,n}\}=\{X_{\mathcal{H}},X_{\mathcal{A},n}\}$ for all $n\in[N]$, we exploit the notation and use $J(X_{\mathcal{H}},X_{\mathcal{A},n})$. We define a tag function formally as follows.
\begin{definition}[Tag Function]\label{tag definition}
A function $J:\mathbb{U}^K\rightarrow\mathbb{U}^l$, $l<k$, is a tag function if for two adversaries that choose two different $X_{A,n_1}$ and $X_{A,n_2}$ independently from $X_{\mathcal{H}}$, and send them to workers $n_1,n_2\in[N]$,   
\begin{align}\label{J definition}
\textrm{Pr}\big(J(X_{\mathcal{H}},X_{\mathcal{A},n_1})=J(X_{\mathcal{H}},X_{\mathcal{A},n_2})\big) < \epsilon 
\end{align}
holds, where $\epsilon\in(0,1)$ is a negligible value.
\end{definition}
There are some notable remarks about this definition.
\begin{itemize}
    \item 
    The probability in \eqref{J definition} is over $X_{\mathcal{H}}$, because the data of honest data owners come from an i.i.d uniform distribution on $\mathbb{U}$.
    \item
    As mentioned in section \ref{problem formulation}, adversaries are unaware of the data of the honest data owners. Therefore, they can not choose $X_{A,n_1}$ and $X_{A,n_2}$ based on the knowledge of $X_{\mathcal{H}}$, and they have to choose $X_{A,n_1}$ and $X_{A,n_2}$ independently from $X_{\mathcal{H}}$.
    \item
    The condition $l<k$ is because we want the tag to be lightweight and communication-efficient. If we do not impose such constraint, each worker $n$ could use the concatenation of all $X_{k,n}$, $k\in[K]$ as a tag, and data owners could use such tag very easily to spot the inconsistencies between workers. However, this trivial solution is not communication efficient at all. 
    We need a tag function that compresses the received $K$ messages in each worker so that the discrepancies between workers caused by adversaries become evident for honest data owners.
\end{itemize}
Suppose that we have a tag function $J$. Any data owner can compare tags of two workers $n_1,n_2\in[N]$, $\mathsf{tag}_{n_1}=J(X_{\mathcal{H}},X_{\mathcal{A},n_1})$ and $\mathsf{tag}_{n_2}=J(X_{\mathcal{H}},X_{\mathcal{A},n_2})$, and by Definition \ref{tag definition}, conclude that $\mathsf{tag}_{n_1}=\mathsf{tag}_{n_2}$ means $n_1$ and $n_2$ had received the same data from all data owners with high probability, i.e. $X_{k,n_1}=X_{k,n_2}$, for all $k\in[K]$. On the other hand, when $\mathsf{tag}_{n_1}\neq\mathsf{tag}_{n_2}$, it is obvious for data owners that $X_{k,n_1}\neq X_{k,n_2}$, for at least one $k\in\mathcal{A}$ (this is simply because $J$ is a function). Therefore, tags can be used by data owners to partition $\mathcal{T}$ into sets of workers who had received the same data from data owners in the first step of Algorithm \ref{Vers algo}, and have used the same initial data in their subsequent computations.

Tag functions are a type of the general \textit{fingerprinting functions}. Fingerprinting is used to map an arbitrarily large data into a small and fixed length digest as its unique identifier, for many practical purposes like avoiding the comparison of bulky data. According to \cite{hendricks2007verifying}, an $\epsilon$-fingerprinting function $fp:\mathcal{K}\times\mathbb{F}^{\delta}\rightarrow\mathbb{F}^{\gamma}$ satisfies 
\begin{align}
    \max_{\substack{d,d'\in\mathbb{F}^{\delta} \\ d\neq d'}} \textrm{Pr}\left[fp(r,d) = fp(r,d'): r \xleftarrow{R} \mathcal{K} \right] \le \epsilon,
\end{align}
where data is a length $\delta$ vector with elements in $\mathbb{F}$, and $r\in\mathcal{K}$ is a random seed, $\mathcal{K}\subset \mathbb{R}$. One of the well-known fingerprinting functions is Rabin's fingerprint \cite{rabin1981fingerprinting}, and its derivatives, that uses random polynomials in the finite field to generate the fingerprint. Another class of fingerprinting functions are cryptographic hash functions. We know that hash function $h$ is collision resistant if
\begin{align}\label{collision resistant hash}
    \textrm{Pr}\big((x_0,x_1)\leftarrow A, x_0\neq x_1: h(x_0)=h(x_1)\big)\le\epsilon,
\end{align}
where $A$ is a probabilistic polynomial adversary (This is just a rough definition. For formal definition, refer to \cite{katz2020introduction}). We cannot use a collision-resistant hash function as a tag function for two reasons.
\begin{itemize}
    \item The probability of collision is small only when $x_0$ and $x_1$ are chosen by a probabilistic polynomial adversary. However, tag function has to be resilient against information theoretic adversary.
    \item According to \eqref{tag definition}, inputs to tag function have a common part, $X_{\mathcal{H}}$, but there is no such constraint on $x_0$ and $x_1$ in \eqref{collision resistant hash}. Even if collision resistance was defined as $\textrm{Pr}\big((x_0,x_1,x_2)\leftarrow A: x_1\neq x_2, h(x_0||x_1)=h(x_0||x_2)\big)\le\epsilon$, such $h$ could not be used as tag, because adversaries could be any of the $\beta$ data owners and not necessarily the last $\beta$ data owners. Note that $J(X_{\mathcal{H}},X_{\mathcal{A},n})$ in \eqref{tag definition} is just notation exploitation, and in fact, worker $n\in\mathbb{N}$ calculates $J(X_{1,n},\dots,X_{K,n})$, in which the $\beta$ adversarial messages could be dispersed in any positions.
\end{itemize}

In the following theorem, we prove that among all functions from $\mathbb{U}^K$ to $\mathbb{U}$, there exists a tag function.
\begin{theorem}\label{random tag function theory}
There exists a tag function  $J^*:\mathbb{U}^K\rightarrow\mathbb{U}$ which satisfies \eqref{J definition}, given $|\mathbb{U}|$ is large enough.
\end{theorem}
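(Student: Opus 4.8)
The plan is to prove existence of a good tag function by a probabilistic (counting) argument: I will choose $J^*$ uniformly at random among all functions $\mathbb{U}^K \to \mathbb{U}$ and show that with positive probability it satisfies \eqref{J definition} for every admissible adversarial choice. First I would fix an arbitrary pair of distinct adversarial inputs $X_{\mathcal{A},n_1} \neq X_{\mathcal{A},n_2}$ (the "seed" data of the two workers from the adversarial coordinates), which by the problem setup are chosen independently of the honest data $X_{\mathcal{H}}$. For a fixed such pair, and a uniformly random $X_{\mathcal{H}} \in \mathbb{U}^{|\mathcal{H}|}$, the two full inputs $(X_{\mathcal{H}}, X_{\mathcal{A},n_1})$ and $(X_{\mathcal{H}}, X_{\mathcal{A},n_2})$ are two distinct points of the domain $\mathbb{U}^K$; under a uniformly random $J^*$, the events $\{J^*(\text{point}_1) = u\}$ are independent and uniform over $\mathbb{U}$, so $\Pr_{J^*}\!\big(J^*(X_{\mathcal{H}},X_{\mathcal{A},n_1}) = J^*(X_{\mathcal{H}},X_{\mathcal{A},n_2})\big) = 1/|\mathbb{U}|$ conditionally on $X_{\mathcal{H}}$, hence also unconditionally.

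The subtlety is that \eqref{J definition} must hold as a bound on the probability over $X_{\mathcal{H}}$ for a \emph{single fixed} function $J^*$, not as an average over $J^*$. So next I would introduce, for a fixed adversarial pair and a fixed $J^*$, the collision count $C_{J^*} = \big|\{X_{\mathcal{H}} : J^*(X_{\mathcal{H}},X_{\mathcal{A},n_1}) = J^*(X_{\mathcal{H}},X_{\mathcal{A},n_2})\}\big|$, whose expectation over random $J^*$ is $|\mathbb{U}|^{|\mathcal{H}|}/|\mathbb{U}|$. By Markov's inequality, $\Pr_{J^*}\big(C_{J^*} \ge \epsilon\, |\mathbb{U}|^{|\mathcal{H}|}\big) \le 1/(\epsilon |\mathbb{U}|)$. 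Then I would union-bound over all choices of the adversarial pair: there are at most $\beta$ adversarial owners each with at most $v$ versions, so the number of candidate pairs $(X_{\mathcal{A},n_1}, X_{\mathcal{A},n_2})$ is bounded by a function of $|\mathbb{U}|, \beta, v$ — crudely at most $|\mathbb{U}|^{2\beta}$, or more carefully at most $\binom{v^\beta}{2} \cdot (\text{number of ways to place }\beta\text{ adversaries among }K\text{ slots})$, but in any case polynomial in $|\mathbb{U}|$ of a fixed degree. The union bound gives that the probability (over $J^*$) that \emph{some} adversarial pair violates the $\epsilon$-bound is at most $|\mathbb{U}|^{2\beta}/(\epsilon |\mathbb{U}|) = |\mathbb{U}|^{2\beta - 1}/\epsilon$, which is \emph{not} less than $1$ in general.

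The main obstacle, therefore, is that a naive union bound over all adversarial pairs is too lossy, since the number of pairs grows like a positive power of $|\mathbb{U}|$ while each bad event only shrinks like $1/|\mathbb{U}|$. To fix this I would sharpen the Markov step using the fact that $C_{J^*}$ is tightly concentrated: $C_{J^*}$ is a sum of $|\mathbb{U}|^{|\mathcal{H}|}$ independent indicator variables (one per value of $X_{\mathcal{H}}$), each with mean $1/|\mathbb{U}|$, so a Chernoff bound gives $\Pr_{J^*}\big(C_{J^*} \ge \epsilon\,|\mathbb{U}|^{|\mathcal{H}|}\big) \le \exp\!\big(-c\,\epsilon\,|\mathbb{U}|^{|\mathcal{H}|-1}\big)$ for a constant $c$ once $\epsilon \gg 1/|\mathbb{U}|$. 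Since $|\mathcal{H}| = K - \beta \ge 1$, this is super-polynomially small in $|\mathbb{U}|$, so it beats the $|\mathbb{U}|^{2\beta}$ union bound for $|\mathbb{U}|$ large enough, and the probability that $J^*$ fails for \emph{any} adversarial pair is strictly below $1$. Hence a good tag function exists, with the threshold on $|\mathbb{U}|$ depending on $\epsilon, \beta, v, K$. (If one only needs the weaker guarantee where the adversary is fixed first and $J^*$ may depend on it, the plain expectation argument already suffices; I would remark on that distinction.)
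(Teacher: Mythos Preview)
Your proposal is correct and starts from exactly the same idea as the paper: pick $J^*$ uniformly at random among all maps $\mathbb{U}^K\to\mathbb{U}$ and compute, for a fixed adversarial pair $(X_{\mathcal{A},n_1},X_{\mathcal{A},n_2})$, the expected collision probability over $X_{\mathcal{H}}$. The paper's proof in fact stops precisely at what you call the ``plain expectation argument'': it shows $\mathbb{E}_{J}\big[\Pr_{X_{\mathcal{H}}}(\text{collision})\big]=1/|\mathbb{U}|$ and immediately concludes that some $J^*$ achieves $\Pr_{X_{\mathcal{H}}}(\text{collision})\le 1/|\mathbb{U}|$, without any union bound over adversarial pairs. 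Your additional step---observing that Definition~\ref{tag definition} requires a \emph{single} $J^*$ to work against every adversarial choice, and therefore layering a Chernoff bound on the collision count $C_{J^*}$ (a sum of independent indicators over $X_{\mathcal{H}}$) followed by a union bound over the at most $|\mathbb{U}|^{2\beta}$ adversarial pairs---is not in the paper. So your route is the same in spirit but strictly more careful: the paper buys simplicity at the cost of leaving the ``one $J^*$ for all adversaries'' quantifier unaddressed, while your Chernoff-plus-union-bound closes that gap at essentially no extra cost once $|\mathcal{H}|\ge 1$. Your closing remark about the weaker fixed-adversary guarantee is exactly the content of the paper's own argument.
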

The proof for this theorem is similar to Shannon's achievability proof for the capacity of discrete memoryless channels, and can be found in Appendix 

\begin{remark}\label{tag is not perfect}
Tags help data owners partition workers, but they cannot resolve all the ambiguity about the data that workers previously received from adversarial data owners. For example, assume that 
\begin{align*}
    \mathsf{tag}_1 = J^*(X_1,X_2^{(1)},X_3^{(1)}),\\
    \mathsf{tag}_2 = J^*(X_1,X_2^{(1)},X_3^{(1)}),\\
    \mathsf{tag}_3 = J^*(X_1,X_2^{(2)},X_3^{(1)}),\\
    \mathsf{tag}_4 = J^*(X_1,X_2^{(2)},X_3^{(2)}),
\end{align*}
where $J^*$ is a tag function. An honest data owner compares these tags and finds that $\mathsf{tag}_1=\mathsf{tag}_2$, $\mathsf{tag}_2\neq\mathsf{tag}_3$, $\mathsf{tag}_2\neq\mathsf{tag}_4$, and $\mathsf{tag}_3\neq\mathsf{tag}_4$. So it concludes that with high probability, workers $1$ and $2$ have received the same data from the adversaries, while workers $3, 4$ each have received different data. This honest data owner has no way to find out that workers $3$ and $4$ have received the same data from the second data owner, i.e. $X_{2,3}=X_{2,4}=X_2^{(2)}$, or workers $2$ and $3$ have received the same data from the third data owner, i.e. $X_{3,2}=X_{3,3}=X_3^{(1)}$. 
\end{remark}

\section{Analysis of $\textrm{Vers}(f,N,K,\beta,v,\textrm{Lagrange})$} \label{definitions section}
In this section, we introduce some tools and use them to analyze the system. These tools help us understand the system better, and will be useful in the next section when we study the fundamental limit of Vers. In the following, we define the coefficient vector, monomial vector, characteristic matrix, and relation matrix, and all of them are constructed from parts associated with every possible adversarial behavior from the point of view of the workers. In other words, the notions we introduce in this section are inclusive of all adversarial behaviors. In the converse proof of the fundamental limit of the system in the next section, we will show that in the worst-case scenario, all adversarial behaviors from the point view of workers can be present in the system, and therefore, need to be addressed.

Recall that there are at most $\beta$ adversarial data owners, and each of them  may send different messages to different workers, but honest data owners send one single message to all workers. The \textit{version vector} of a worker as we define in the following, indicates which version of each adversarial message that worker has received.

\begin{definition}[version vector]\label{version vector definition}
For worker $n\in[N]$, we define a version vector $\mathbf{v}\in[v^+]^K$ of length $K$, whose $k$-th element, if $k\in\mathcal{A}$, is in $[v]$, and denotes the version of the message of adversarial data owner $k\in\mathcal{A}$ received by worker $n$. Moreover, if $k\in\mathcal{H}$, the $k$-th element of $\mathbf{v}$ is $0$, indicating that all workers receive the same message from the honest data owner $k$.
\end{definition}
In other words, worker node $n$ receives $X_k^{(\mathbf{v}[k])}$ from the adversarial data owner $k$. Note that worker $n$ does not know $\mathbf{v}$, because it neither knows the adversaries nor the versions of the adversarial messages. Worker $n$ receives $K$ messages from data owners, without knowing anything else about the adversarial or honest data owners, and performs some computations on them. The version vector of worker $n$ is equivalent to the adversarial behavior that this worker observes.
Since there are at most $v$ different adversarial messages from each adversarial data owner, there exist $v^{\beta}$ different version vectors. This brings us to the next definition.

It is worth noting that we did not include subscript $n$ for version vector of worker $n$ in Definition \ref{version vector definition} because we will not address version vectors by the workers who received them. Rather, we use subscripts in version vectors to enumerate them, as will be clear in the following definition.

\begin{definition}[version vector set] The version vector set denoted by $\mathcal{V}\coloneqq\{\mathbf{v}_1,\mathbf{v}_2,\dots,\mathbf{v}_{v^{\beta}}\}\subset [v^+]^K$, is the set of all the $v^{\beta}$ possible version vectors.
\end{definition}
According to the system model described in Section \ref{problem formulation}, each worker computes a linear combination of the received messages, as in \eqref{W formula}. Since we are studying Vers with Lagrange encoding, each worker $n\in[N]$ whose version vector is $\mathbf{v}$, forms the following Lagrange polynomial when encoding the received data from data owners.
\begin{align}\label{q^V definition}
    q^{(\mathbf{v})}(z) \coloneqq\sum_{k\in\mathcal{A}} X_k^{(\mathbf{v}[k])}\prod_{j\neq k}\frac{z-\omega_j}{\omega_i-\omega_j}+ \sum_{k\in\mathcal{H}} X_k\prod_{j\neq k}\frac{z-\omega_j}{\omega_i-\omega_j},
\end{align}
where $\omega_k\in\mathbb{F}, k\in[K]$ are $K$ distinct elements assigned to data owners. This can be rewritten as
\begin{align}\label{encoding polynomial q}
    q^{(\mathbf{v})}(z) = \sum_{i=0}^{K-1}L_i(X_s, s\in \mathcal{H}, X_r^{(\mathbf{v}[r])}, r\in \mathcal{A},f(.))z^i,
\end{align}
where $L_0,\dots,L_{K-1}:\mathbb{F}^{K}\rightarrow\mathbb{F}$ are linear maps. Then, $f(q^{(\mathbf{v})}(z))$ can be rewritten as
\begin{align}\label{foq expansion}
    f(q^{(\mathbf{v})}(z)) = \sum_{i=0}^{d(K-1)}u_i(X_s, s\in \mathcal{H}, X_r^{(\mathbf{v}[r])}, r\in \mathcal{A},f(.))z^i,
\end{align}
where $u_0,\dots,u_{d(K-1)}:\mathbb{F}^{K}\rightarrow\mathbb{F}$ are polynomials of degree $d$ of $X_r^{(\mathbf{v}[r])}, r\in\mathcal{A}$ and $X_s,s\in\mathcal{H}$. Worker $n\in[N]$ evaluates $f(q^{(\mathbf{v})}(z))$ at $\alpha_n$, i.e. $f(W_n)=f(q^{(\mathbf{v})}(\alpha_n))$, where $\alpha_n, n\in[N]$ are $n$ elements from $\mathbb{F}$, assigned to workers. In the following, we define the coefficient vector and coefficient sub-vector, so that we can express the computations of workers as multiplications of matrices.
\begin{definition}[coefficient vector]\label{coefficient vector definition}
Let the version vector set be $\mathcal{V} = \{\mathbf{v}_1,\dots,\mathbf{v}_{v^{\beta}}\}$. The coefficient vector for a  $\textrm{Vers}(f,N,K,\beta,v,\textrm{Lagrange})$ system is defined as
\begin{align}\label{U definition}
    \mathbf{U}\big(X_s, s\in\mathcal{H}, X_r^{(i)}, r\in\mathcal{A},i\in[v],f(.)\big) \coloneqq 
    \left[ \begin{array}{c}
    \mathbf{u}_{\mathbf{v}_1}(X_r^{(\mathbf{v}_1[r])},X_s, r\in\mathcal{A},s\in\mathcal{H},f)  \\ \hdashline \vdots \\ \hdashline \mathbf{u}_{\mathbf{v}_{v^{\beta}}}(X_r^{(\mathbf{v}_{v^{\beta}}[r])},X_s, r\in\mathcal{A},s\in\mathcal{H},f)
    \end{array} \right],
\end{align}
where $\mathbf{u}_{\mathbf{v}}\big(X_s, s\in \mathcal{H}, X_r^{(\mathbf{v}[r])}, r\in \mathcal{A},f(.)\big), \mathbf{v}\in\mathcal{V}$ is a coefficient sub-vector defined as 
\begin{align}\label{u (subvector) definition}
    \mathbf{u}_{\mathbf{v}}\big(X_s, s\in \mathcal{H}, X_r^{(\mathbf{v}[r])}, r\in \mathcal{A},f(.)\big) \coloneqq \left[\begin{array}{c}
    u_{d(K-1)}\big(X_s, s\in \mathcal{H}, X_r^{(\mathbf{v}[r])}, r\in \mathcal{A},f(.)\big) \\
    \vdots \\
    u_{0}\big(X_s, s\in \mathcal{H}, X_r^{(\mathbf{v}[r])}, r\in \mathcal{A},f(.)\big)
    \end{array}\right], \quad \mathbf{v}\in\mathcal{V}.
\end{align}
\end{definition}
The length of each coefficient sub-vector is $d(K-1)+1$, so the length of coefficient vector $\mathbf{U}$ is $v^{\beta}(d(K-1)+1)$. As mentioned before, the coefficient vector contains all of the $v^{\beta}$ possible adversarial behaviors. Using the definition above, it is clear that for worker $n\in[N]$ whose version vector is $\mathbf{v}\in\mathcal{V}$,
\begin{align}
    f(W_n) = \left[\alpha^{d(K-1)}~\dots~\alpha~ 1 \right]\mathbf{u}_{\mathbf{v}}\big(X_s, s\in \mathcal{H}, X_r^{(\mathbf{v}[r])}, r\in \mathcal{A},f(.)\big).
\end{align}
Therefore, the coefficient vector helps us express the computations of the workers properly. This motivates us to study the coefficient vector more in-depth. The following example clarifies the above definition.
\begin{example}\label{example 1}
Consider the following parameters.
\begin{center}
\begin{tabular}{ |c|c|c|c|c|c|c|c| } 
 \hline
 $K$ & $N$ & $\beta$ & $v$ & $\mathcal{A}$ & $\mathcal{H}$ & $(\omega_1,\omega_2,\omega_3)$ & $f(x)$  \\ \hline
 $3$ & $3$ & $1$ & $2$ & $\{1,2\}$ & $\{3\}$ & $(1,2,3)$ & $x^2$ \\ \hline
\end{tabular}
\end{center}
Since $v=2$, the version vector set is $\mathcal{V} = \{[1,1,0],[1,2,0],[2,1,0],[2,2,0]\}$. The Lagrange polynomial for the version vector $[i_1,i_2,0]\in\mathcal{V}$ is
\begin{align*}
    q^{([i_1,i_2,0])}(z)&=\frac{(z-2)(z-3)}{2}X^{(i_1)}_1 + \frac{(z-1)(z-3)}{-1}X_2^{(i_2)} +\frac{(z-1)(z-2)}{2}X_3= z^2(\frac{X_1^{(i_1)}}{2}-X_2^{(i_2)}+\frac{X_3}{2}) \nonumber \\
    & + z(-\frac{5X_1^{(i_1)}}{2} +2X_2^{(i_2)} -\frac{3X_3}{2})+(3X_1^{(i_1)}-3X_2^{(i_2)}+X_3),~ [i_1,i_2,0]\in\mathcal{V}.
\end{align*}
Consequently,
\begin{align}\label{f o u in example 1}
f(q^{([i_1,i_2,0])}(z)) &=\bigg(\frac{(X_1^{(i_1)})^2}{4}-X_1^{(i_1)}X_2^{(i_2)}+\frac{X_1^{(i_1)}X_3}{2}+(X_2^{(i_2)})^2  -X_2^{(i_2)}X_3+\frac{X_3^2}{4}\bigg)z^4 + \bigg(\frac{-5(X_1^{(i_1)})^2}{2} \nonumber \\
& +9X_1^{(i_1)}X_2^{(i_2)} -4X_1^{(i_1)}X_3 -8(X_2^{(i_2)})^2+7X_2^{(i_2)}X_3-\frac{3X_3^2}{2}\bigg)z^3 + \bigg(\frac{37(X_1^{(i_1)})^2}{4}-29X_1^{(i_1)}X_2^{(i_2)}   \nonumber \\
& +\frac{23X_1^{(i_1)}X_3}{2}+22(X_2^{(i_2)})^2 -17X_2^{(i_2)}X_3 +\frac{13X_3^2}{4}\bigg)z^2 +\bigg(-15(X_1^{(i_1)})^2 +39X_1^{(i_1)}X_2^{(i_2)}   \nonumber \\
& -14X_1^{(i_1)}X_3-24(X_2^{(i_2)})^2+17X_2^{(i_2)}X_3 -3X_3\bigg)z +\bigg(9(X_1^{(i_1)})^2 -18X_1^{(i_1)}X_2^{(i_2)} +6X_1^{(i_1)}X_3 \nonumber \\ 
& +9(X_2^{(i_2)})^2-6X_2^{(i_2)}X_3+X_3^2\bigg).
\end{align}
Therefore,
\begin{align}\label{coefficient sub vector example}
    \mathbf{u}_{[i_1,i_2,0]}(X_1^{(i_1)},X_2^{(i_2)},X_3,f) = \left[ \begin{array}{c}
        \frac{(X_1^{(i_1)})^2}{4}-X_1^{(i_1)}X_2^{(i_2)}+\frac{X_1^{(i_1)}X_3}{2}+(X_2^{(i_2)})^2  -X_2^{(i_2)}X_3+\frac{X_3^2}{4} \\
        \frac{-5(X_1^{(i_1)})^2}{2}+9X_1^{(i_1)}X_2^{(i_2)}-4X_1^{(i_1)}X_3 -8(X_2^{(i_2)})^2+7X_2^{(i_2)}X_3-\frac{3X_3^2}{2} \\
        \frac{37(X_1^{(i_1)})^2}{4}-29X_1^{(i_1)}X_2^{(i_2)}  +\frac{23X_1^{(i_1)}X_3}{2}+22(X_2^{(i_2)})^2 -17X_2^{(i_2)}X_3 +\frac{13X_3^2}{4} \\
        -15(X_1^{(i_1)})^2 +39X_1^{(i_1)}X_2^{(i_2)}-14X_1^{(i_1)}X_3 -24(X_2^{(i_2)})^2+17X_2^{(i_2)}X_3-3X_3 \\
        9(X_1^{(i_1)})^2 -18X_1^{(i_1)}X_2^{(i_2)} +6X_1^{(i_1)}X_3+9(X_2^{(i_2)})^2-6X_2^{(i_2)}X_3+X_3^2
    \end{array}
    \right],
\end{align}
and
\begin{align}\label{coefficient vector example}
\mathbf{U}(X_1^{(i_1)},X_2^{(i_2)},i_1,i_2\in[2],X_3,f) = \left[ \begin{array}{c}
     \mathbf{u}_{[1,1,0]}(X_1^{(1)},X_2^{(1)},X_3,f) \\
     \mathbf{u}_{[1,2,0]}(X_1^{(1)},X_2^{(2)},X_3,f) \\
     \mathbf{u}_{[2,1,0]}(X_1^{(2)},X_2^{(1)},X_3,f) \\
     \mathbf{u}_{[2,2,0]}(X_1^{(2)},X_2^{(2)},X_3,f) 
\end{array} \right]
\end{align}
\end{example}
\begin{remark}\label{order in coefficient vector}
In Definition \ref{coefficient vector definition}, we implied a particular order on the coefficient sub-vectors in \eqref{U definition}. The first and topmost sub-vector in the coefficient vector is $\mathbf{u}_{\mathbf{v}_1}$, the second is $\mathbf{u}_{\mathbf{v}_2}$, and so forth, until $\mathbf{u}_{\mathbf{v}_{v^{\beta}}}$. In Example \ref{example 1}, we chose $\mathbf{v}_1=[1,1,0]$, $\mathbf{v}_2=[1,2,0]$, $\mathbf{v}_3=[2,1,0]$, and $\mathbf{v}_4=[2,2,0]$. We could have used another assignment for $\mathbf{v}_1,\mathbf{v}_2,\mathbf{v}_3,\mathbf{v}_4$, as different assignments only lead to different permutations of the same coefficient vector. Therefore, the order of coefficient sub-vectors in coefficient vector is optional but fixed.
\end{remark}
Every element of the coefficient vector is a summation of some monomials. For example, in equation \eqref{coefficient sub vector example} from Example \ref{example 1}, $\mathbf{u}_{[i_1,i_2,0]}(X_1^{(i_1)},X_2^{(i_2)},X_3,f)$  shows that elements of the coefficient vector are made of some monomials. Therefore, in order to understand the coefficient vector better, we define a new vector that contains the monomials that appear in the coefficient vector. Before defining the monomial vector, we need to define the degree set of a function. The degree set of a single variable function $f:\mathbb{U}\rightarrow\mathbb{V}$, denoted by $\mathcal{D}(f)$, contains the degrees of the monomials in $f$. For example, $\mathcal{D}(f)=\{0,1,3\}$ for $f(x)=x^3+x+1$.
\begin{definition}[monomial vector]
The \textit{monomial vector} for a $\textrm{Vers}(f,N,K,\beta,v,\textrm{Lagrange})$ system contains all degree $e\in\mathcal{D}_f$ monomials of $X_k, k\in \mathcal{H}$ and $X_k^{(i)}, k\in\mathcal{A}, i\in[v]$, and is defined as 
\begin{align}\label{monomial vector formula}
    \mathbf{X}\big(X_s, s\in\mathcal{H}, X_r^{(i)}, r\in\mathcal{A},i\in[v],f(.)\big) \coloneqq \left[\prod_{k\in\mathcal{A}}(X_k^{(\mathbf{v}[k])})^{i_k}\prod_{k\in\mathcal{H}}(X_{k})^{j_k}, ~\textrm{s.t.} \sum_{k\in\mathcal{A}}i_k+\sum_{k\in\mathcal{H}}j_k\in \mathcal{D}(f), \textrm{and} ~\mathbf{v}\in\mathcal{V} \right]^\intercal.
\end{align}
\end{definition}
In the following example, we determine the monomial vector for Vers system introduced in Example \ref{example 1}.
\begin{example}\label{example 2, monomial of example 1}
Recall that in Example \ref{example 1}, $f(x)=x^2$, and $\mathcal{D}(f)=\{2\}$. Therefore, the monomial vector contains all degree $2$ monomials of $X_k, k\in \mathcal{H}$ and $X_k^{(i)}, k\in\mathcal{A}, i\in[v]$. In particular, 
\begin{itemize}
    \item for $i_1=2, i_2=0, i_3=0$, we get the monomials $(X_1^{(1)})^2,~ (X_1^{(2)})^2$,
    \item for $i_1=0, i_2=2, i_3=0$, we get the monomials $(X_2^{(1)})^2,~ (X_2^{(2)})^2$,
    \item for $i_1=0, i_2=0, i_3=2$, we get the monomial $(X_3)^2$,
    \item for $i_1=1, i_2=1, i_3=0$, we get the monomials $X_1^{(1)}X_2^{(1)}, ~ X_1^{(2)}X_2^{(1)}, X_1^{(1)}X_2^{(2)}, ~ X_1^{(2)}X_2^{(2)}$,
    \item for $i_1=1, i_2=0, i_3=1$, we get the monomials $X_1^{(1)}X_3, ~ X_1^{(2)}X_3$,
    \item for $i_1=0, i_2=1, i_3=1$, we get the monomials $X_2^{(1)}X_3, ~ X_2^{(2)}X_3$.
\end{itemize}
Therefore, 
\begin{align}\label{monomial vector example}
    \mathbf{X}\big(X_1^{(i_1)},X_2^{(i_2)},X_3,i_1,i_2\in[2],f(x)=x^2\big) =& \bigg[(X_1^{(1)})^2,~ (X_1^{(2)})^2, ~ (X_2^{(1)})^2,~ (X_2^{(2)})^2, ~ (X_3)^2, X_1^{(1)}X_2^{(1)}, X_1^{(2)}X_2^{(1)},\nonumber\\ 
    &X_1^{(1)}X_2^{(2)}, ~ X_1^{(2)}X_2^{(2)}, 
    X_1^{(1)}X_3, ~ X_1^{(2)}X_3, 
     X_2^{(1)}X_3, ~ X_2^{(2)}X_3\bigg]^\intercal.
\end{align}
\end{example}
\begin{remark}
Note that the order of the monomials in monomial vector is optional, but fixed in all subsequent equations. For example, we could have formed $\mathbf{X}\big(X_1^{(i_1)},X_2^{(i_2)},X_3,i_1,i_2\in[2],f(x)=x^2\big)$ in \eqref{monomial vector example} as
\begin{align*}
    \biggr[(X_1^{(2)})^2,~ (X_1^{(1)})^2, ~ (X_2^{(2)})^2,~ (X_2^{(1)})^2, ~ (X_3)^2, & ~ X_1^{(2)}X_2^{(1)},~ X_1^{(1)}X_2^{(1)}, ~  X_1^{(2)}X_2^{(2)}, X_1^{(1)}X_2^{(2)},  \\
    & X_1^{(1)}X_3,
    ~ X_1^{(2)}X_3,  X_2^{(1)}X_3, ~ X_2^{(2)}X_3\biggr]^\intercal.
\end{align*}
However, we chose the order in \eqref{monomial vector example}, and will keep using that in the follow-up of Example \ref{example 2, monomial of example 1}.
\end{remark}
\begin{lemma}
The length of monomial vector $\mathbf{X}\big(X_r^{(i)},X_s, r\in\mathcal{A},s\in\mathcal{H},i\in[v]\big)$ in a $\textrm{Vers}(f,N,K,\beta,v,\textrm{Lagrange})$ system is
\begin{align}\label{L definition}
    \mathsf{Len}(\mathbf{X}) \coloneqq \sum_{e\in\mathcal{D}(f)}\sum_{m=0}^{\min(e,\beta)}\binom{\beta}{m}\binom{K-\beta-1+e}{e-m}v^m
\end{align}
\end{lemma}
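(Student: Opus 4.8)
The plan is to treat this as a pure counting statement. By construction in \eqref{monomial vector formula}, the entries of $\mathbf{X}$ are the (distinct) monomials of the form $\prod_{k\in\mathcal{A}}(X_k^{(\mathbf{v}[k])})^{i_k}\prod_{k\in\mathcal{H}}X_k^{j_k}$ whose total exponent $\sum_{k\in\mathcal{A}}i_k+\sum_{k\in\mathcal{H}}j_k$ lies in $\mathcal{D}(f)$, with $\mathbf{v}$ ranging over $\mathcal{V}$. So $\mathsf{Len}(\mathbf{X})$ is the number of such distinct monomials, and I would obtain \eqref{L definition} by grouping them first by total degree $e\in\mathcal{D}(f)$ and then by the number $m$ of adversarial data owners that genuinely occur (i.e. appear with a positive exponent).

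First I would fix the canonical data determining a monomial. Dropping the zero-exponent factors, a monomial as above is specified exactly by the tuple $\big(e,\;S,\;(c_r)_{r\in S},\;(i_r)_{r\in S},\;(j_s)_{s\in\mathcal{H}}\big)$, where $S=\{k\in\mathcal{A}:i_k\ge 1\}$, $c_r=\mathbf{v}[r]\in[v]$ is the version of adversary $r\in S$ that is used, $i_r\ge 1$, $j_s\ge 0$, and $\sum_{r\in S}i_r+\sum_{s\in\mathcal{H}}j_s=e\in\mathcal{D}(f)$. Two points need care: (i) the version index $\mathbf{v}[r]$ of an adversary $r$ with $i_r=0$ does not affect the monomial, so different $\mathbf{v}$'s that agree on $S$ must not be double-counted; and (ii) every admissible tuple is actually realized, because any partial version assignment on $S$ extends to some $\mathbf{v}\in\mathcal{V}$ (since $\mathcal{V}$ is the full product over adversarial coordinates). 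I would then argue that the map sending a monomial of $\mathbf{X}$ to its tuple is a bijection onto the set of admissible tuples: injectivity is immediate since a monomial displays $S$, each $c_r$, each $i_r$, each $j_s$ and its total degree; well-definedness and surjectivity are exactly (i)--(ii).

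Given the bijection, the count is elementary stars-and-bars. Fix $e\in\mathcal{D}(f)$ and $m=|S|$; then $0\le m\le\min(e,\beta)$, the upper bound holding because $S\subseteq\mathcal{A}$ with $|\mathcal{A}|=\beta$ and because each of the $m$ exponents $i_r$ contributes at least $1$ to $e$. There are $\binom{\beta}{m}$ choices of $S$, then $v^{m}$ choices of $(c_r)_{r\in S}$, and finally the number of exponent vectors with $i_r\ge 1$ for $r\in S$ and $j_s\ge 0$ for the $K-\beta$ honest owners, summing to $e$, equals (after the substitution $i_r\mapsto i_r-1$) the number of nonnegative solutions of a sum equal to $e-m$ in $m+(K-\beta)$ variables, namely $\binom{(e-m)+(m+K-\beta)-1}{(m+K-\beta)-1}=\binom{K-\beta-1+e}{e-m}$. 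Summing over $m$ and then over $e$ (there is no overlap between different $e$, since those monomials have different total degree) yields exactly \eqref{L definition}.

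I expect the only genuine subtlety --- and the step I would write out most carefully --- is the bookkeeping in this bijection: interpreting \eqref{monomial vector formula} as a set of distinct monomials and making sure a monomial is counted once even though it arises from many inputs $(\mathbf{v},(i_k),(j_k))$ that agree on $S$ and on the exponents but differ on the irrelevant version coordinates of adversaries outside $S$. I would also state explicitly the standing assumption $|\mathcal{A}|=\beta$ (equivalently $|\mathcal{H}|=K-\beta$), in keeping with the worst-case viewpoint adopted in this section; with fewer adversaries the same argument gives the formula with $\beta$ replaced by $|\mathcal{A}|$, so \eqref{L definition} is in fact the maximal length. Everything else is routine arithmetic, which I would not reproduce here.
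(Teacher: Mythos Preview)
Your proposal is correct and follows essentially the same approach as the paper's own proof: both fix $e\in\mathcal{D}(f)$, condition on the number $m$ of adversarial indices appearing with positive exponent, multiply $\binom{\beta}{m}$, $v^m$, and the stars-and-bars count $\binom{K-\beta-1+e}{e-m}$, and sum. Your write-up is, if anything, more careful than the paper's in making explicit the bijection that prevents double counting over $\mathbf{v}$'s agreeing on $S$, and in flagging the standing assumption $|\mathcal{A}|=\beta$.
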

\begin{proof}
For each $e\in\mathcal{D}(f)$, we need to count the number of distinct monomials $\prod_{k\in\mathcal{A}}(X_k^{(\mathbf{v}[k])})^{i_k}\prod_{k\in\mathcal{H}}(X_{k})^{j_k}$, that $\sum_{k\in\mathcal{A}}i_k+\sum_{k\in\mathcal{H}}j_k=e$ and $\mathbf{v}\in\mathcal{V}$. We count as following:
\begin{itemize}
    \item We choose a subset $\mathcal{M}\in\mathcal{A}$, $|\mathcal{M}|=m$, of adversarial messages to be in the monomial, so $m\le \beta$ and $m\le e$, thus $0\le m\le \min(e,\beta)$. This makes the term $\binom{\beta}{m}$ in \eqref{L definition}.
    \item There are $v^m$ different version values for the $m$ adversarial messages in the monomial, which makes the term $v^m$ in \eqref{L definition}.
    \item We need to count the number of possible cases for the powers of the chosen $m$ adversarial messages and the $K-\beta$ honest messages. This is the number of solutions for $\sum_{k\in\mathcal{M}}i_k+\sum_{k\in\mathcal{H}}j_k=e$, where $1\le i_k$, for $k\in\mathcal{M}$, and $0\le j_k$, for $k\in\mathcal{H}$. From combinatorics, we know that the number of such solutions is $\binom{K-\beta-1+e}{e-m}$. This completes the proof.
\end{itemize}
\end{proof}
Now that the monomial vector is defined, we can express each element of the coefficient vector as a linear combination of the monomials in the monomial vector. We do so by defining the characteristic matrix.
\begin{definition}[characteristic matrix] \label{M definition}
The \textit{characteristic matrix} of a $\textrm{Vers}(f,N,K,\beta,v,\textrm{Lagrange})$ system, denoted by $\mathbf{M}(\mathcal{H},\mathcal{A},v,f)$ illustrates the relation between the coefficient vector and the monomial vector, i.e.
\begin{align}
    \mathbf{U}\big(X_s, s\in\mathcal{H}, X_r^{(i)}, r\in\mathcal{A},i\in[v],f(.)\big)=\mathbf{M}\big(\mathcal{H},\mathcal{A},v,f\big)\mathbf{X}\big(X_s, s\in\mathcal{H}, X_r^{(i)}, r\in\mathcal{A},i\in[v],f(.)\big).
\end{align}
\end{definition}
The dimension of the characteristic matrix is $\big(v^{\beta}(d(K-1)+1)\big)\times \mathsf{Len}(\mathbf{X})$. It is worth noting that similar to the coefficient vector, the characteristic matrix contains all the $v^{\beta}$ adversarial behaviors.
\begin{example}\label{example: M of example 1}
For Vers system introduced in Example \ref{example 1}, we can easily form $\mathbf{M}\big(\mathcal{H}=\{3\},\mathcal{A}=\{1,2\},v=2,f(x)=x^2\big)$ by using \eqref{coefficient sub vector example}, \eqref{coefficient vector example}, and also \eqref{monomial vector example} from Example \ref{example 2, monomial of example 1}. The resulting characteristic matrix is shown in \eqref{M for example 1}.
\end{example}
We can decompose the characteristic matrix as
\begin{align}\label{M decomposition}
    \mathbf{M}\big(\mathcal{H},\mathcal{A},v,f\big) =  \left[ \begin{array}{c}
    \mathbf{M}_{\mathbf{v}_1}\big(\mathcal{H},\mathcal{A},f\big)  \\ \hdashline \vdots \\ \hdashline \mathbf{M}_{\mathbf{v}_{v^{\beta}}}\big(\mathcal{H},\mathcal{A},f\big)
    \end{array} \right],
\end{align}
where $\{\mathbf{v}_1,\dots,\mathbf{v}_{v^{\beta}}\}=\mathcal{V}$. The dimension of each sub-matrix $\mathbf{M}_{\mathbf{v}}\big(\mathcal{H},\mathcal{A},f\big)$, $\mathbf{v}\in\mathcal{V}$, is $\big(d(K-1)+1\big)\times \mathsf{Len}(\mathbf{X})$. In fact, the sub-matrix $\mathbf{M}_{\mathbf{v}}$ is dedicated to coefficient sub-vector $\mathbf{u}_{\mathbf{v}}\big(X_s, s\in \mathcal{H}, X_r^{(\mathbf{v}[r])}, r\in \mathcal{A},f(.)\big)$, and expresses its elements as linear combinations of the element of the monomial vector. Consequently, the order of the version vectors that we had chosen in the coefficient vector, as discussed in Remark \ref{order in coefficient vector}, determines the order of the sub-matrices in the characteristic matrix. We emphasize that this order is optional but fixed.

\begin{figure*}
    \centering
\begin{align}\label{M for example 1} \hspace*{-1.5cm} 
\left[\begin{array}{c}
     u_4(X_1^{(1)},X_2^{(1)},X_3) \\
     u_3(X_1^{(1)},X_2^{(1)},X_3) \\
     u_2(X_1^{(1)},X_2^{(1)},X_3) \\
     u_1(X_1^{(1)},X_2^{(1)},X_3) \\
     u_0(X_1^{(1)},X_2^{(1)},X_3) \\ \hdashline
     u_4(X_1^{(1)},X_2^{(2)},X_3) \\
     u_3(X_1^{(1)},X_2^{(2)},X_3) \\
     u_2(X_1^{(1)},X_2^{(2)},X_3) \\
     u_1(X_1^{(1)},X_2^{(2)},X_3) \\
     u_0(X_1^{(2)},X_2^{(2)},X_3) \\ \hdashline
     u_4(X_1^{(2)},X_2^{(1)},X_3) \\
     u_3(X_1^{(2)},X_2^{(1)},X_3) \\
     u_2(X_1^{(2)},X_2^{(1)},X_3) \\
     u_1(X_1^{(2)},X_2^{(1)},X_3) \\
     u_0(X_1^{(2)},X_2^{(1)},X_3) \\ \hdashline
     u_4(X_1^{(2)},X_2^{(2)},X_3) \\
     u_3(X_1^{(2)},X_2^{(2)},X_3) \\
     u_2(X_1^{(2)},X_2^{(2)},X_3) \\
     u_1(X_1^{(2)},X_2^{(2)},X_3) \\
     u_0(X_1^{(2)},X_2^{(2)},X_3)
\end{array}\right]= \left[\begin{array}{cc:cc:c:cccc:cc:cc}
     \frac{1}{4} & 0 & 1 & 0 & \frac{1}{4} & -1 & 0 & 0 & 0 & \frac{1}{2} & 0 & -1 & 0 \\
     -\frac{5}{2} & 0 & -8 & 0 & -\frac{3}{2} & 9 & 0 & 0 & 0 & -4 & 0 & 7 & 0 \\
     \frac{37}{4} & 0 & 22 & 0 & \frac{13}{4} & -29 & 0 & 0 & 0 & \frac{23}{2} & 0 & -17 & 0 \\
     -15 & 0 & -24 & 0 & -3 & 39 & 0 & 0 &  0 & -14 & 0 & -17 & 0 \\
     9 & 0 & 9 & 0 & 1 & -18 & 0 & 0 & 0 & 6 & 0 & -6 & 0 \\ \hdashline
     \frac{1}{4} & 0 & 0 & 1 & \frac{1}{4} & 0 & 0 & -1 & 0 & \frac{1}{2} & 0 & 0 & -1 \\
     -\frac{5}{2} & 0 & 0 & -8 & -\frac{3}{2} & 0 & 0 & 9 & 0 & -4 & 0 & 0 & 7 \\
     \frac{37}{4} & 0 & 0 & 22 & \frac{13}{4} & 0 & 0 & -29 & 0 & \frac{23}{2} & 0 & 0 & -17 \\
     -15 & 0 & 0 & -24 & -3 & 0 & 0 & 39 & 0 & -14 & 0 & 0 & 17 \\
     9 & 0 & 0 & 9 & 1 & 0 & 0 & -18 & 0 & 6 & 0 & 0 & -6 \\ \hdashline
     0 & \frac{1}{4} & 1 & 0 & \frac{1}{4} & 0 & -1 & 0 & 0 & 0 & \frac{1}{2} & -1 & 0 \\
     0 & -\frac{5}{2} & -8 & 0 & -\frac{3}{2} & 0 & 9 & 0 & 0 & 0 & -4 & 7 & 0 \\
     0 & \frac{37}{4} & 22 & 0 & \frac{13}{4} & 0 & -29 & 0 & 0 & 0 & \frac{23}{2} & -17 & 0 \\
     0 & -15 & -24 & 0 & -3 & 0 & 39 & 0 & 0 & 0 & -14 & 17 & 0 \\
     0 & 9 & 9 & 0 & 1 & 0 & -18 & 0 & 0 & 0 & 6 & -6 & 0 \\ \hdashline
     0 & \frac{1}{4} & 0 & 1 & \frac{1}{4} & 0 & 0 & 0 & -1 & 0 & \frac{1}{2} & 0 & -1 \\
     0 & -\frac{5}{2} & 0 & -8 & -\frac{3}{2} & 0 & 0 & 0 & 9 & 0 & -4 & 0 & 7 \\
     0 & \frac{37}{4} & 0 & 22 & \frac{13}{4} & 0 & 0 & 0 & -29 & 0 & \frac{23}{2} & 0 & -17 \\
     0 & -15 & 0 & -24 & -3 & 0 & 0 & 0 & 39 & 0 & -14 & 0 & 17 \\
     0 & 9 & 0 & 9 & 1 & 0 & 0 & 0 & -18 & 0 & 6 & 0 & -6
\end{array}\right]\left[\begin{array}{c}
     (X_1^{(1)})^2 \\ (X_1^{(2)})^2 \\ \hdashline
     (X_2^{(1)})^2 \\ (X_2^{(2)})^2 \\ \hdashline
     (X_3)^2 \\ \hdashline
     X_1^{(1)}X_2^{(1)} \\ X_1^{(2)}X_2^{(1)} \\ X_1^{(1)}X_2^{(2)} \\ X_1^{(2)}X_2^{(2)} \\ \hdashline X_1^{(1)}X_3 \\ X_1^{(2)}X_3 \\ \hdashline
     X_2^{(1)}X_3 \\ X_2^{(2)}X_3
\end{array}\right].
\end{align}
\end{figure*}
\begin{definition}(relation matrix)\label{P definition}
The \textit{relation matrix} in a $\textrm{Vers}(f,N,K,\beta,v,\textrm{Lagrange})$ system, which we denote by $\mathbf{P}\big(\mathcal{H},\mathcal{A},v,f\big)$, consists of basis of the left null space of  $\mathbf{M}\big(\mathcal{H},\mathcal{A},v,f\big)$, i.e.
\begin{align}\label{PM=0}
    \mathbf{P}\big(\mathcal{H},\mathcal{A},v,f\big)\mathbf{M}\big(\mathcal{H},\mathcal{A},v,f\big) = \mathbf{0}.
\end{align}
\end{definition}
The dimension of the relation matrix is $\big(v^{\beta}(d(K-1)+1)-\textrm{rank}(\mathbf{M})\big) \times \big(v^{\beta}(d(K-1)+1)\big)$. According to Definition \ref{M definition}, $\mathbf{P}\big(\mathcal{H},\mathcal{A},v,f\big)\mathbf{M}\big(\mathcal{H},\mathcal{A},v,f\big) = \mathbf{0}$ is equivalent to
\begin{align}\label{PU=0}
    \mathbf{P}\big(\mathcal{H},\mathcal{A},v,f\big)\mathbf{U}\big(X_s, s\in\mathcal{H}, X_r^{(i)}, r\in\mathcal{A},i\in[v],f(.)\big) = \mathbf{0}.
\end{align}
This is why we have named $\mathbf{P}\big(\mathcal{H},\mathcal{A},v,f\big)$ as relation matrix, because it reveals the linear relations of the elements of the coefficient vector. Similar to the characteristic matrix, we can decompose the relation matrix as
\begin{align}\label{P decomposition}
    \mathbf{P}\big(\mathcal{H},\mathcal{A},v,f\big) = \left[
    \begin{array}{c:c:c}
       \mathbf{P}_{\mathbf{v}_1}\big(\mathcal{H},\mathcal{A},f\big) & \dots & \mathbf{P}_{\mathbf{v}_{v^{\beta}}}\big(\mathcal{H},\mathcal{A},f\big)
    \end{array}
    \right],
\end{align}
where $\{\mathbf{v}_1,\dots,\mathbf{v}_{v^{\beta}}\}=\mathcal{V}$. The dimension of each sub-matrix $\mathbf{P}_{\mathbf{v}}\big(\mathcal{H},\mathcal{A},f\big)$, $\mathbf{v}\in\mathcal{V}$, is $\big(v^{\beta}(d(K-1)+1)-\textrm{rank}(\mathbf{M})\big) \times (d(K-1)+1)$. Due to \eqref{M decomposition} and \eqref{PM=0},
\begin{align*}
    \mathbf{P}_{\mathbf{v}_1}\mathbf{M}_{\mathbf{v}_1}+\dots+\mathbf{P}_{\mathbf{v}_{v^{\beta}}}\mathbf{M}_{\mathbf{v}_{v^{\beta}}} = \mathbf{0},
\end{align*}
and due to \eqref{M definition},
\begin{align*}
    \mathbf{P}_{\mathbf{v}_1}\mathbf{u}_{\mathbf{v}_1}+\dots+\mathbf{P}_{\mathbf{v}_{v^{\beta}}}\mathbf{u}_{\mathbf{v}_{v^{\beta}}} = \mathbf{0}.
\end{align*}
Now, we introduce \textit{effective} and \textit{non-effective} permutations of the relation matrix.
\begin{definition}(effective permutations of the relation matrix)\label{permutation of P definition}
Suppose that 
\begin{align*}
    \mathbf{P}\big(\mathcal{H},\mathcal{A},v,f\big) = \left[
    \begin{array}{c:c:c}
       \mathbf{P}_{\mathbf{v}_1}\big(\mathcal{H},\mathcal{A},f\big) & \dots & \mathbf{P}_{\mathbf{v}_{v^{\beta}}}\big(\mathcal{H},\mathcal{A},f\big)
    \end{array}
    \right]
\end{align*}
is the relation matrix of a $\textrm{Vers}(f,N,K,\beta,v,\textrm{Lagrange})$ system. A permutation $\Pi:[v^{\beta}]\rightarrow[v^{\beta}]$ of the relation matrix is $\mathbf{P}^{\Pi}\big(\mathcal{H},\mathcal{A},v,f\big) = \left[
    \begin{array}{c:c:c}
       \mathbf{P}_{\mathbf{v}_{\Pi(1)}}\big(\mathcal{H},\mathcal{A},f\big) & \dots & \mathbf{P}_{\mathbf{v}_{\Pi(v^{\beta})}}\big(\mathcal{H},\mathcal{A},f\big)
    \end{array}
    \right]$, and is effective if $\mathbf{P}^{\Pi}\mathbf{M}\neq \mathbf{0}$, and non-effective if $\mathbf{P}^{\Pi}\mathbf{M}= \mathbf{0}$.
\end{definition}
The intuition behind this definition is the special structure of the characteristic matrix, which is also evident in \eqref{M for example 1}. This special structure will get clear in the following example.
\begin{example}\label{effective permutation of example 1}
Recall that for Vers system introduced in Example \eqref{example 1}, we had set $\mathbf{v}_1=[1,1,0]$, $\mathbf{v}_2=[1,2,0]$, $\mathbf{v}_3=[2,1,0],\mathbf{v}_4=[2,2,0]$, and $\mathbf{P}\big(\mathcal{H}=\{3\},\mathcal{A}=\{1,2\},v=2,f(x)=x^2\big)=[\mathbf{P}_{\mathbf{v}_1}~ \mathbf{P}_{\mathbf{v}_2}~ \mathbf{P}_{\mathbf{v}_3}~ \mathbf{P}_{\mathbf{v}_4}]$. We show that in this Vers system,
there are $3$ non-effective permutations of $\mathbf{P}$, say $\Pi_1,\Pi_2,$ and $\Pi_3$, as following.
\begin{align}
    \mathbf{P}^{\Pi_1} &= [\mathbf{P}_{\mathbf{v}_4}~ \mathbf{P}_{\mathbf{v}_3}~ \mathbf{P}_{\mathbf{v}_2}~ \mathbf{P}_{\mathbf{v}_1}] \label{pi_1 in example 1}\\
    \mathbf{P}^{\Pi_2} &= [\mathbf{P}_{\mathbf{v}_3} ~\mathbf{P}_{\mathbf{v}_4} ~\mathbf{P}_{\mathbf{v}_1} ~\mathbf{P}_{\mathbf{v}_2}] \label{pi_2 in example 1}\\
    \mathbf{P}^{\Pi_3} &= [\mathbf{P}_{\mathbf{v}_2}~ \mathbf{P}_{\mathbf{v}_1}~ \mathbf{P}_{\mathbf{v}_4}~ \mathbf{P}_{\mathbf{v}_3}] \label{pi_3 in example 1}
\end{align}
In other words, $\mathbf{P}^{\Pi_1}\mathbf{M}=\mathbf{P}^{\Pi_2}\mathbf{M}=\mathbf{P}^{\Pi_3}\mathbf{M}=\mathbf{0}$, where $\mathbf{M}$ is given in \eqref{M for example 1}. Let us consider $\Pi_1$ and show that $\mathbf{P}^{\Pi_1}\mathbf{M}=\mathbf{0}$. The arguments for the other two permutations are similar. We denote a column $i$, $1\le i\le 13$, of $\mathbf{M}$ as $\mathbf{c_i}=[c_{i,1}^\intercal~ c_{i,2}^\intercal~ c_{i,3}^\intercal~ c_{i,4}^\intercal]^\intercal$, where the length of $c_{i,1},c_{i,2},c_{i,3},c_{i,4}$ is $5$.
We know that $\mathbf{P}c_i=\mathbf{P}_{\mathbf{v}_1}c_{i,1}+\mathbf{P}_{\mathbf{v}_2}c_{i,2}+\mathbf{P}_{\mathbf{v}_3}c_{i,3}+\mathbf{P}_{\mathbf{v}_4}c_{i,4}=\mathbf{0}$. We consider all  $1\le i\le 13$ in the following.
\begin{itemize}
\item $i=1, i=2$ \\
A close look at the first two columns of $\mathbf{M}$ in \eqref{M for example 1} reveals that $c_{1,1}=c_{1,2}=c_{2,3}=c_{2,4}\coloneqq c^*_{\{1,2\}}$, and $c_{1,3}=c_{1,4}=c_{2,1}=c_{2,2}=\mathbf{0}$. Therefore 
\begin{align*}
    \mathbf{P}\mathbf{c}_1 = \mathbf{P}_{\mathbf{v}_1}c_{1,1}+\mathbf{P}_{\mathbf{v}_2}c_{1,2}+\mathbf{P}_{\mathbf{v}_3}c_{1,3}+\mathbf{P}_{\mathbf{v}_4}c_{1,4}=\mathbf{0},
\end{align*}
is equivalent to 
\begin{align}\label{p1 c1}
    (\mathbf{P}_{\mathbf{v}_1}+\mathbf{P}_{\mathbf{v}_2})c^*_{\{1,2\}} = \mathbf{0},
\end{align}
and
\begin{align*}
    \mathbf{P}\mathbf{c}_2 =\mathbf{P}_{\mathbf{v}_1}c_{2,1}+\mathbf{P}_{\mathbf{v}_2}c_{2,2}+\mathbf{P}_{\mathbf{v}_3}c_{2,3}+\mathbf{P}_{\mathbf{v}_4}c_{2,4}=\mathbf{0},
\end{align*}
is equivalent to 
\begin{align}\label{p1 c2}
    (\mathbf{P}_{\mathbf{v}_3}+\mathbf{P}_{\mathbf{v}_4})c^*_{\{1,2\}} = \mathbf{0}.
\end{align}
Consequently, for the permuted relation matrix in \eqref{pi_1 in example 1},
\begin{align*}
    \mathbf{P}^{\Pi_1}\mathbf{c}_1 = \mathbf{P}_{\mathbf{v}_4}c_{1,1}+\mathbf{P}_{\mathbf{v}_3}c_{1,2}+\mathbf{P}_{\mathbf{v}_2}c_{1,3}+\mathbf{P}_{\mathbf{v}_1}c_{1,4} = (\mathbf{P}_{\mathbf{v}_4}+\mathbf{P}_{\mathbf{v}_3})c^*_{\{1,2\}},
\end{align*}
and due to \eqref{p1 c2}, we conclude that $\mathbf{P}^{\Pi_1}\mathbf{c}_1 =\mathbf{0}$. Similarly,
\begin{align*}
    \mathbf{P}^{\Pi_1}\mathbf{c}_2 = \mathbf{P}_{\mathbf{v}_4}c_{2,1}+\mathbf{P}_{\mathbf{v}_3}c_{2,2}+\mathbf{P}_{\mathbf{v}_2}c_{2,3}+\mathbf{P}_{\mathbf{v}_1}c_{2,4} = (\mathbf{P}_{\mathbf{v}_2}+\mathbf{P}_{\mathbf{v}_1})c^*_{\{1,2\}},
\end{align*}
and due to \eqref{p1 c1}, we conclude that $\mathbf{P}^{\Pi_1}\mathbf{c}_2 =\mathbf{0}$.
\item $i=3, i=4$ \\
According to $\mathbf{M}$ in \eqref{M for example 1}, $c_{3,1}=c_{3,3}=c_{4,2}=c_{4,4}\coloneqq c^*_{\{3,4\}}$, and $c_{3,2}=c_{3,4}=c_{4,1}=c_{4,3}=\mathbf{0}$. Therefore,
\begin{align}\label{p1 c3}
    \mathbf{P}\mathbf{c}_3 =\mathbf{P}_{\mathbf{v}_1}c_{3,1}+\mathbf{P}_{\mathbf{v}_2}c_{3,2}+\mathbf{P}_{\mathbf{v}_3}c_{3,3}+\mathbf{P}_{\mathbf{v}_4}c_{3,4}=(\mathbf{P}_{\mathbf{v}_1}+\mathbf{P}_{\mathbf{v}_3})c^*_{\{3,4\}} = \mathbf{0},
\end{align}
and
\begin{align}\label{p1 c4}
    \mathbf{P}\mathbf{c}_4 =\mathbf{P}_{\mathbf{v}_1}c_{4,1}+\mathbf{P}_{\mathbf{v}_2}c_{4,2}+\mathbf{P}_{\mathbf{v}_3}c_{4,3}+\mathbf{P}_{\mathbf{v}_4}c_{4,4}=(\mathbf{P}_{\mathbf{v}_2}+\mathbf{P}_{\mathbf{v}_4})c^*_{\{3,4\}} = \mathbf{0}.
\end{align}
As a result, 
\begin{align*}
    \mathbf{P}^{\Pi_1}\mathbf{c}_3 = \mathbf{P}_{\mathbf{v}_4}c_{3,1}+\mathbf{P}_{\mathbf{v}_3}c_{3,2}+\mathbf{P}_{\mathbf{v}_2}c_{3,3}+\mathbf{P}_{\mathbf{v}_1}c_{3,4} = (\mathbf{P}_{\mathbf{v}_4}+\mathbf{P}_{\mathbf{v}_2})c^*_{\{3,4\}},
\end{align*}
and due to \eqref{p1 c4}, we conclude that $\mathbf{P}^{\Pi_1}\mathbf{c}_3=\mathbf{0}$. Similarly,
\begin{align*}
    \mathbf{P}^{\Pi_1}\mathbf{c}_4 = \mathbf{P}_{\mathbf{v}_4}c_{4,1}+\mathbf{P}_{\mathbf{v}_3}c_{4,2}+\mathbf{P}_{\mathbf{v}_2}c_{4,3}+\mathbf{P}_{\mathbf{v}_1}c_{4,4} = (\mathbf{P}_{\mathbf{v}_3}+\mathbf{P}_{\mathbf{v}_1})c^*_{\{3,4\}},
\end{align*}
and due to \eqref{p1 c3}, we conclude that $\mathbf{P}^{\Pi_1}\mathbf{c}_4=\mathbf{0}$.
\item $i=5$\\
Since $c_{5,1}=c_{5,2}=c_{5,3}=c_{5,4}\coloneqq c^*_{\{5\}}$ in the fifth column of $\mathbf{M}$ according to \eqref{M for example 1}, 
\begin{align*}
    \mathbf{P}^{\Pi_1}\mathbf{c}_5 =
    \mathbf{P}_{\mathbf{v}_4}c_{5,1}+\mathbf{P}_{\mathbf{v}_3}c_{5,2}+\mathbf{P}_{\mathbf{v}_2}c_{5,3}+\mathbf{P}_{\mathbf{v}_1}c_{5,4} &= (\mathbf{P}_{\mathbf{v}_4} + \mathbf{P}_{\mathbf{v}_3} + \mathbf{P}_{\mathbf{v}_2} + \mathbf{P}_{\mathbf{v}_1})c^*_{\{5\}} \\
    &  = \mathbf{P}_{\mathbf{v}_1}c_{5,1}+\mathbf{P}_{\mathbf{v}_2}c_{5,2}+\mathbf{P}_{\mathbf{v}_3}c_{5,3}+\mathbf{P}_{\mathbf{v}_4}c_{5,4} = \mathbf{P}\mathbf{c}_5 = \mathbf{0}.
\end{align*}
\item $i=6, i=7, i=8, i=9$ \\
According to $\mathbf{M}$ in \eqref{M for example 1}, $c_{6,1}=c_{7,3}=c_{8,2}=c_{9,4}\coloneqq c^*_{\{6,7,8,9\}}$. Thus, 
\begin{align*}
    \mathbf{P}^{\Pi_1}\mathbf{c}_6 &= \mathbf{P}_{\mathbf{v}_4}c_{6,1}+\mathbf{P}_{\mathbf{v}_3}c_{6,2}+\mathbf{P}_{\mathbf{v}_2}c_{6,3}+\mathbf{P}_{\mathbf{v}_1}c_{6,4} = \mathbf{P}_{\mathbf{v}_4}c^*_{\{6,7,8,9\}} = \mathbf{P}\mathbf{c}_9 = \mathbf{0}, \\
    \mathbf{P}^{\Pi_1}\mathbf{c}_7 &= \mathbf{P}_{\mathbf{v}_4}c_{7,1}+\mathbf{P}_{\mathbf{v}_3}c_{7,2}+\mathbf{P}_{\mathbf{v}_2}c_{7,3}+\mathbf{P}_{\mathbf{v}_1}c_{7,4} = \mathbf{P}_{\mathbf{v}_2}c^*_{\{6,7,8,9\}} = \mathbf{P}\mathbf{c}_8 = \mathbf{0}, \\
    \mathbf{P}^{\Pi_1}\mathbf{c}_8 &= \mathbf{P}_{\mathbf{v}_4}c_{8,1}+\mathbf{P}_{\mathbf{v}_3}c_{8,2}+\mathbf{P}_{\mathbf{v}_2}c_{8,3}+\mathbf{P}_{\mathbf{v}_1}c_{8,4} = \mathbf{P}_{\mathbf{v}_3}c^*_{\{6,7,8,9\}} = \mathbf{P}\mathbf{c}_7 = \mathbf{0}, \\
    \mathbf{P}^{\Pi_1}\mathbf{c}_9 &= \mathbf{P}_{\mathbf{v}_4}c_{9,1}+\mathbf{P}_{\mathbf{v}_3}c_{9,2}+\mathbf{P}_{\mathbf{v}_2}c_{9,3}+\mathbf{P}_{\mathbf{v}_1}c_{9,4} = \mathbf{P}_{\mathbf{v}_1}c^*_{\{6,7,8,9\}} = \mathbf{P}\mathbf{c}_6 = \mathbf{0}.
\end{align*}
\item $i=10, i=11$\\
This case is exactly like $i=1, i=2$.
\item $i=12, i=13$\\
This case is exactly like $i=3, i=4$.
\end{itemize}
We proved that in Example \ref{example 1}, $\mathbf{P}^{\Pi_1}$ in \eqref{pi_1 in example 1} is a non-effective permutation of $\mathbf{P}$, i.e. $\mathbf{P}^{\Pi_1}\mathbf{M}=\mathbf{0}$.
\end{example}
\begin{remark}
Any non-effective permutation $\Pi:\{1,\dots,v^{\beta}\}\rightarrow\{1,\dots,v^{\beta}\}$ of the relation matrix, $\mathbf{P}^{\Pi}$, is in the row span of the relation matrix $\mathbf{P}$. Since $\mathbf{P}$ makes up the left null space of $\mathbf{M}$ by definition, $\mathbf{P}^{\Pi}\mathbf{M}=\mathbf{0}$ means $\mathbf{P}^{\Pi}$ should be in the row span of $\mathbf{P}$. In other words, any row of $\mathbf{P}^{\Pi}$ is a linear combination of the rows of $\mathbf{P}$.
\end{remark}
\begin{lemma}\label{number of non-effective permutations lemma}
In a $\textrm{Vers}(f,N,K,\beta,v,\textrm{Lagrange})$ system, there are $(v!)^{\beta}$ non-effective permutations of the relation matrix, out of all $(v^{\beta})!$ permutations.
\end{lemma}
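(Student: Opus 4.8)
Plan of proof. I would start by identifying the index set of version vectors with the grid $[v]^{\mathcal{A}}$: a version vector is $0$ on every honest coordinate, so $\mathcal{V}$ is exactly the set of assignments of a version in $[v]$ to each of the $\beta$ adversaries, and a permutation $\Pi$ of the block columns of $\mathbf{P}$ is the same data as a permutation $\pi$ of $[v]^{\mathcal{A}}$. A short bookkeeping computation with block indices then shows that $\mathbf{P}^{\Pi}\mathbf{M}$ equals $\mathbf{P}$ times the matrix obtained from $\mathbf{M}$ by reshuffling its $v^{\beta}$ block rows according to $\pi$; since $\mathbf{P}$ spans the left null space of $\mathbf{M}$, this product is $\mathbf{0}$ iff that row-reshuffled matrix has the same column span as $\mathbf{M}$. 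Hence $\Pi$ is non-effective iff the coordinate permutation of block rows induced by $\pi$ preserves $\textrm{colspan}(\mathbf{M})$; in particular, as already hinted at in the Remark preceding the lemma, the non-effective permutations form a subgroup of $\textrm{Sym}(\mathcal{V})$.

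The one structural fact I need about $\mathbf{M}$, visible in \eqref{M for example 1}, is the following: reading \eqref{foq expansion}, the coefficient vector $u_0,\dots,u_{d(K-1)}$ of a worker with version vector $\mathbf{v}$ is a \emph{fixed} polynomial map evaluated at $\bigl(X_s\bigr)_{s\in\mathcal{H}}$ and $\bigl(X_r^{(\mathbf{v}[r])}\bigr)_{r\in\mathcal{A}}$. Consequently a column of $\mathbf{M}$ indexed by a monomial involving exactly the adversaries in a set $\mathcal{M}\subseteq\mathcal{A}$ with versions $\mathbf{j}\in[v]^{\mathcal{M}}$ is supported precisely on the block rows of the subcube $S(\mathcal{M},\mathbf{j}):=\{\mathbf{v}\in\mathcal{V}:\mathbf{v}|_{\mathcal{M}}=\mathbf{j}\}$, and there it equals a vector $\mathbf{c}_{\mathcal{M},\theta}$ that depends only on the ``shape'' $\theta$ of the monomial (its exponent pattern on the owners) and not on the version labels. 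I would state and prove this as a preliminary claim; it is the content of the column-by-column verification in Example \ref{effective permutation of example 1} made general.

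For the inequality ``$\ge(v!)^{\beta}$'': given any tuple $(\tau_r)_{r\in\mathcal{A}}$ of permutations $\tau_r\in\textrm{Sym}([v])$, the \emph{coordinatewise} permutation $\pi$ defined by $\pi(\mathbf{v})[r]=\tau_r(\mathbf{v}[r])$ maps each subcube $S(\mathcal{M},\mathbf{j})$ onto $S\bigl(\mathcal{M},(\tau_r(j_r))_{r\in\mathcal{M}}\bigr)$, so reshuffling the block rows of $\mathbf{M}$ by $\pi$ merely permutes the columns of $\mathbf{M}$ among themselves; the column span is unchanged, and $\pi$ is non-effective. Distinct tuples yield distinct permutations, so this exhibits $(v!)^{|\mathcal{A}|}=(v!)^{\beta}$ non-effective permutations — exactly the four permutations checked by hand in Example \ref{effective permutation of example 1} for $\beta=2,\ v=2$.

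The converse ``$\le(v!)^{\beta}$'', i.e.\ that every non-effective $\pi$ is coordinatewise, is the main obstacle. It suffices to show that a non-effective $\pi$ carries each codimension-one subcube $S(\{r\},j)$ onto a codimension-one subcube with the \emph{same} coordinate $r$; intersecting these over all $r$ then forces $\pi(\mathbf{v})[r]=\tau_r(\mathbf{v}[r])$ for permutations $\tau_r$. To obtain this one characterizes these subcubes intrinsically inside $\textrm{colspan}(\mathbf{M})$: because the interpolation nodes $\omega_1,\dots,\omega_K$ assigned to the data owners are distinct, the shape subspaces $C_{\{r\}}=\textrm{span}\{\mathbf{c}_{\{r\},\theta}\}$ genuinely depend on $r$, and the only size-$v^{\beta-1}$ subsets $T\subseteq[v]^{\mathcal{A}}$ on which every vector of $C_{\{r\}}$ can be ``planted'' so that the result still lies in $\textrm{colspan}(\mathbf{M})$ are the subcubes $S(\{r\},\cdot)$. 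The technical heart is this rigidity statement — controlling which linear combinations of columns of $\mathbf{M}$, each supported on its own subcube, can be supported on a prescribed set $T$ — which is where the explicit Lagrange structure and the distinctness of the $\omega_k$ (and of the worker evaluation points $\alpha_n$) must really be used. As a sanity check on why the count is $(v!)^{\beta}$ and not larger: a coordinate-\emph{permuting} $\pi$, such as $\mathbf{v}\mapsto(\mathbf{v}[2],\mathbf{v}[1])$ for $\beta=2$, would require $\mathbf{c}_{\{1\},\theta}=\mathbf{c}_{\{2\},\theta}$, which is impossible since $\omega_1\ne\omega_2$; hence such permutations are effective, consistent with the bound being $(v!)^{\beta}$ rather than $(v!)^{\beta}\beta!$.
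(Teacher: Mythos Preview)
Your reformulation is correct and illuminating: permuting the block columns of $\mathbf{P}$ by $\Pi$ is the same as multiplying $\mathbf{P}$ into the matrix obtained from $\mathbf{M}$ by permuting its $v^{\beta}$ block rows by $\Pi^{-1}$, and since the rows of $\mathbf{P}$ span the left null space of $\mathbf{M}$, this vanishes exactly when the row-shuffled $\mathbf{M}$ has the same column span as $\mathbf{M}$. From this viewpoint your lower bound (``$\ge (v!)^{\beta}$'') is immediate --- a coordinatewise permutation of $\mathcal{V}\cong[v]^{\mathcal{A}}$ maps each subcube $S(\mathcal{M},\mathbf{j})$ to a subcube $S(\mathcal{M},\mathbf{j}')$ and hence merely permutes the columns of $\mathbf{M}$ among themselves. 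This is exactly the argument the paper carries out in Appendix~\ref{number of non-effective permutations proof}: it reduces to a single transposition in one adversarial coordinate, and for each column $\mathbf{c}_i$ of $\mathbf{M}$ it exhibits a ``neighbour'' column $\mathbf{c}_{i'}$ witnessing $\mathbf{P}^{\Pi}\mathbf{c}_i=\mathbf{P}\mathbf{c}_{i'}=\mathbf{0}$. So for this direction you and the paper agree, with your column-span framing being the cleaner packaging.

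The notable divergence is the converse (``$\le (v!)^{\beta}$''). The paper's appendix \emph{does not prove it}: after announcing the lemma it states only ``We show that all permutations of the relation matrix of the form $\Pi = \pi_1\times\cdots\times\pi_{\beta}$ are non-effective'' and establishes precisely that, never arguing that no other permutation can be non-effective. Your proposal therefore goes strictly beyond the paper on this half. The outline you give --- characterize the codimension-one subcubes $S(\{r\},j)$ intrinsically inside $\mathrm{colspan}(\mathbf{M})$ via the shape-dependent vectors $\mathbf{c}_{\{r\},\theta}$, then show that a column-span-preserving block-row permutation must send each such subcube to another subcube on the \emph{same} axis $r$ --- is a sensible strategy, and the sanity check ruling out axis-swaps because $\omega_1\neq\omega_2$ forces $C_{\{1\}}\neq C_{\{2\}}$ is encouraging. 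But, as you yourself flag, the rigidity step remains a sketch: one still has to exclude that a linear combination of columns supported on several genuine subcubes can mimic a single column supported on a set $T$ that is not a subcube (or lies on the wrong axis), and this is where the Lagrange structure and the distinctness of the $\omega_k$ must do concrete work. In short, your lower bound is correct and coincides with the paper's; your upper bound is additional content that the paper leaves unproved, and your outline for it is plausible but not yet complete.
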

We know that the relation matrix has $v^{\beta}$ sub-matrices and hence, there are $(v^{\beta})!$ possible permutations of relation matrix, including effective and non-effective. Moreover, we know that each sub-matrix is associated with a version vector that has $\beta$ elements from $[v]$. There are $v!$ permutations for each of those $\beta$ elements, and $(v!)^{\beta}$ permutations in total. This lemma states that only those $(v!)^{\beta}$ permutations are non-effective.
The proof of this lemma is in Appendix \ref{number of non-effective permutations proof}.

\section{Fundamental Limit of $\textrm{Vers}(f,N,K,\beta,v,\textrm{Lagrange})$ }\label{main result section}
In this section, we prove Theorem \ref{main result}, and provide achievability and converse proofs.
\subsection{Converse proof}\label{converse proof}
In this proof, we show that there exists a particular adversarial behavior, for which, $t=v^{\beta}d(K-1)$ messages from $t$ workers are not enough to find $f(X_k), k\in\mathcal{H}$ correctly. In other words, we show that $\{f(W_n),\mathsf{tag}_n,~n\in\mathcal{T}\}$, where $\mathcal{T}\subseteq [N]$, and $|\mathcal{T}|=v^{\beta}d(K-1)$, leads to more that one possible value for at least one $f(X_k), k\in\mathcal{H}$.

Consider the adversarial behavior where the $\beta$ adversarial data owners collude and distribute their messages to workers in $\mathcal{T}$ such that for every $\mathbf{v}\in\mathcal{V}$, there exist exactly $d(K-1)$ workers in $\mathcal{T}$ that receive adversarial messages whose versions are according to $\mathbf{v}$.

For a set $\mathcal{S}\subseteq\mathbb{F}$, and an integer $D\in\mathbb{N}$, let $\textrm{Van}_{\mathcal{S}}^D$ be a $|\mathcal{S}|\times(D+1)$ Vandermonde matrix that has $|\mathcal{S}|$ rows, and each row consists of powers $0,1,\dots,D$ of an element of $\mathcal{S}$. For example, 
\begin{align*}
    \textrm{Van}_{\{\alpha_1,\alpha_2\}}^3=\begin{bmatrix}
    \alpha_1^3 & \alpha_1^2 & \alpha_1 & 1 \\
    \alpha_2^3 & \alpha_2^2 & \alpha_2 & 1
    \end{bmatrix}.
\end{align*}
\textbf{Step $1$}. Suppose that an honest data owner $k\in\mathcal{H}$ receives $y_n=f(W_n)$, and $\mathsf{tag}_n$, from worker $n\in\mathcal{T}$. The honest data owner uses tags from workers in $\mathcal{T}$ to make up $v^{\beta}$ disjoint sets $\mathcal{T}_1,\dots,\mathcal{T}_{v^{\beta}}\subseteq\mathcal{T}$, such that each set contains workers whose tags are equal, but different from tags of workers in other sets. According to the adversarial behavior that we consider in this proof, $N_i\coloneqq|\mathcal{T}_i|=d(K-1)$, $i\in[v^{\beta}]$.
The properties of the tag function ensure the honest data owner that the $\beta$ adversarial messages used by workers $n\in\mathcal{T}_i$ and $n'\in\mathcal{T}_j$ from different sets are different in version of at least one message. Recall that the honest data owner does not know the adversarial data owners. Moreover, the honest data owner cannot know which workers have previously received the same message from the adversarial data owners, and thus have used the same messages in their computations, or, which workers have previously received different messages from the adversarial data owners, and thus have used different messages in their computations. \\
\textbf{Step $2$}. Let $\mathbf{y}_i=[y_j]_{j\in\mathcal{T}_i}$, and $\mathbf{Q}_i=\textrm{Van}_{\{\alpha_n,n\in\mathcal{T}_i\}}^{d(K-1)}$, for $i\in[v^{\beta}]$. There is an underlying mapping $\phi:[v^{\beta}]\rightarrow[v^{\beta}]$, that indicates the adversarial messages that workers in $\mathcal{T}_1,\dots,\mathcal{T}_{v^{\beta}}$ have previously received from adversarial data owners. In particular, $\phi(i)=j$ means $\mathbf{y}_i=\mathbf{Q}_i\mathbf{u}_{\mathbf{v}_j}$, where $i,j\in[v^{\beta}]$ and $u_{\mathbf{v}_j}$, $\mathbf{v}_j\in\mathcal{V}$ is a coefficient sub-vector defined in \eqref{u (subvector) definition}. In other words, $\mathbf{y}_i=\mathbf{Q}_i\mathbf{u}_{\mathbf{v}_j}$ means that workers $n\in\mathcal{T}_i$ have used adversarial messages whose versions are in $\mathbf{v}_j$. Without loss of generality, assume that $\phi$ is the identity permutation, i.e. $\mathbf{y}_i=\mathbf{Q}_i\mathbf{u}_{\mathbf{v}_i}$, for all $i\in[v^{\beta}]$.

The honest data owner $k$ considers a coefficient vector $\mathbf{U}'$ as unknown, and tries to solve equations to find $\mathbf{U}'$. However, the honest data owner does not know $\phi$, so it needs to consider all different possible cases for $\phi$, and make sure that all cases result in a single $f(X_k)$. Suppose that the honest data owner considers
an effective permutation $\Pi:[v^{\beta}]\rightarrow [v^{\beta}]$, and assumes that $\mathbf{y}_i=\mathbf{Q}_i\mathbf{u}'_{\mathbf{v}_{\Pi(i)}}$. Therefore,
\begin{align}\label{Q(u-u')=0 label}
    \mathbf{Q}_i (\mathbf{u}_{\mathbf{v}_i} - \mathbf{u}'_{\mathbf{v}_{\Pi(i)}}) = \mathbf{0}, \quad i\in[v^{\beta}].
\end{align}
\textbf{Step $3$}. Recall that $f(X_k)=f(q^{(\mathbf{v})}(\omega_k))$, $k\in\mathcal{H}$, for any $\mathbf{v}\in\mathcal{V}$, so
\begin{align}
    \left[ f(X_k)\right]_{k\in\mathcal{H}} =  \big(\textrm{Van}_{\{\omega_k,k\in\mathcal{H}\}}^{d(K-1)+1}\big)\mathbf{u}_{\mathbf{v}_i}, \quad i\in[v^{\beta}]
\end{align}
Consequently, all values of $\textrm{Van}_{\{\omega_k,k\in\mathcal{H}\}}^{d(K-1)+1}\mathbf{u}_{\mathbf{v}_i}$ for $i\in[v^{\beta}]$ are equal. Similarly, all values of $\textrm{Van}_{\{\omega_k,k\in\mathcal{H}\}}^{d(K-1)+1}\mathbf{u}'_{\mathbf{v}_{\Pi(i)}}$, for $i\in[v^{\beta}]$ are equal as well. We will show that any effective permutation $\Pi$ results in different values for $\textrm{Van}_{\{\omega_k,k\in\mathcal{H}\}}^{d(K-1)+1}\mathbf{u}_{\mathbf{v}_i}$ and $\textrm{Van}_{\{\omega_k,k\in\mathcal{H}\}}^{d(K-1)+1}\mathbf{u}'_{\mathbf{v}_{\Pi(i)}}$, $i,j \in [v^{\beta}]$. If there exists one $i^*\in[v^{\beta}]$ such that 
\begin{align}
    \textrm{Van}_{\{\omega_k,k\in\mathcal{H}\}}^{d(K-1)+1} (\mathbf{u}_{\mathbf{v}_{i^*}} - \mathbf{u}'_{\mathbf{v}_{\Pi(i^*)}}) = \mathbf{0},
\end{align}
then we can conclude that 
\begin{align}
    \left[ f(X_k)\right]_{k\in\mathcal{H}} =  \big(\textrm{Van}_{\{\omega_k,k\in\mathcal{H}\}}^{d(K-1)+1}\big)\mathbf{u}'_{\mathbf{v}_{\Pi(i)}}, \quad i\in [v^{\beta}],
\end{align}
because
\begin{align}
    \left[ f(X_k)\right]_{k\in\mathcal{H}} = \big(\textrm{Van}_{\{\omega_k,k\in\mathcal{H}\}}^{d(K-1)+1}\big)\mathbf{u}_{\mathbf{v}_{i^*}}
\end{align}
and
\begin{align}
    \big(\textrm{Van}_{\{\omega_k,k\in\mathcal{H}\}}^{d(K-1)+1}\big)\mathbf{u}'_{\mathbf{v}_{\Pi(i)}} = \big(\textrm{Van}_{\{\omega_k,k\in\mathcal{H}\}}^{d(K-1)+1}\big)\mathbf{u}'_{\mathbf{v}_{\Pi(i^*)}}, \quad i\in [v^{\beta}].
\end{align}
In other words, if there exists such $i^*$, permutation $\Pi$ results in correct answer for $f(X_k), k\in\mathcal{H}$. By contradiction, we suppose that there exists such $i^*$. Therefore, 
\begin{align}\label{Q(u-u')=0 and Van(u-u')=0}
   \begin{split}
        \mathbf{Q}_i (\mathbf{u}_{\mathbf{v}_i} - \mathbf{u}'_{\mathbf{v}_{\Pi(i)}}) &= \mathbf{0}, \quad i\in[v^{\beta}],\\
    \big(\textrm{Van}_{\{\omega_k,k\in\mathcal{H}\}}^{d(K-1)+1}\big) (\mathbf{u}_{\mathbf{v}_i} - \mathbf{u}'_{\mathbf{v}_{\Pi(i)}}) &= \mathbf{0}, \quad i\in[v^{\beta}].
   \end{split}
\end{align}
The matrix $\mathbf{Q}_i$ only contains powers of $\alpha_n, n\in\mathcal{T}_i$, and $\textrm{Van}_{\{\omega_k,k\in\mathcal{H}\}}^{d(K-1)+1}$ contains powers of $\omega_k,k\in\mathcal{H}$, and these elements are distinct and chosen uniformly at random from $\mathbb{F}$. Therefore, the matrix 
\begin{align*}
\left[\begin{array}{c}
\mathbf{Q}_i \\\hdashline  \textrm{Van}_{\{\omega_k,k\in\mathcal{H}\}}^{d(K-1)+1}
\end{array}\right]
\end{align*}
is MDS, its dimension is $(d(K-1)+h)\times (d(K-1)+1)$, and its rank is $d(K-1)+1$. The length of $\mathbf{u}_{\mathbf{v}_i} - \mathbf{u}'_{\mathbf{v}_{\Pi(i)}}$ is $d(K-1)+1$, therefore any $d(K-1)+1$ equations of \eqref{Q(u-u')=0 and Van(u-u')=0} result in
\begin{align}\label{result of contradiction}
    \mathbf{u}_{\mathbf{v}_i} = \mathbf{u}'_{\mathbf{v}_{\Pi(i)}},\quad i\in[v^{\beta}].
\end{align}
\textbf{Step $4$}. We know that $\mathbf{P}\mathbf{U}=0$ and $\mathbf{P}\mathbf{U}'=0$, thus,
\begin{align}
    \mathbf{P}_{\mathbf{v}_1}\mathbf{u}_{\mathbf{v}_1}+\dots+\mathbf{P}_{\mathbf{v}_{v^{\beta}}}\mathbf{u}_{\mathbf{v}_{v^{\beta}}} &= \mathbf{0}, \label{PU=0} \\
    \mathbf{P}_{\mathbf{v}_{\Pi(1)}}\mathbf{u}'_{\mathbf{v}_{\Pi(1)}}+\dots+\mathbf{P}_{\mathbf{v}_{\Pi(v^{\beta})}}\mathbf{u}'_{\mathbf{v}_{\Pi(v^{\beta})}} &= \mathbf{0}. \label{PU'=0}
\end{align}
We substitute \eqref{result of contradiction} in \eqref{PU'=0}, 
\begin{align}
    \mathbf{P}_{\mathbf{v}_{\Pi(1)}}\mathbf{u}_{\mathbf{v}_1}+\dots+\mathbf{P}_{\mathbf{v}_{\Pi(v^{\beta})}}\mathbf{u}_{\mathbf{v}_{v^{\beta}}} &= \mathbf{0}, \label{PU'=0 2}
\end{align}
Recall that $\Pi$ is an effective permutation, so according to  definition, $[\mathbf{P}_{\mathbf{v}_{\Pi(1)}}~\dots~\mathbf{P}_{\mathbf{v}_{\Pi(v^{\beta})}}]$ is not in row span of $\mathbf{P}$. Therefore, equation \eqref{PU'=0 2} exerts additional constraint on $\mathbf{U}$ and consequently all messages of data owners. However, messages of data owners, honest and adversarial, can have any value and any constraint on them means contradiction. This concludes the converse proof.

\subsection{Achievability Proof}\label{achievable proof}
In the achievability proof, we suppose that the honest data owner $k\in\mathcal{H}$ receives $t^*=v^{\beta}d(K-1)+1$ messages from $t^*$ workers. The honest data owner uses the received tags to make up $v^{\beta}$ disjoint sets of workers, such that each set contains workers whose tags are equal but different from tags of workers in other sets. Since $t^*>v^{\beta}d(K-1)$, one of these sets definitely contains at least $d(K-1)+1$ workers, which we call $\mathcal{T^*}$. There exists $i^*\in[v^{\beta}]$ such that
\begin{align}
    \big(\textrm{Van}_{\{\alpha_n,n\in\mathcal{T^*}\}}^{d(K-1)}\big)\mathbf{u}_{\mathbf{v}_{i^*}} = [y_j]_{j\in\mathcal{T^*}}.
\end{align}
This equation determines $\mathbf{u}_{\mathbf{v}_{i^*}}$ uniquely. Then, the honest data owner can find $f(X_k)=\big(\textrm{Van}_{\omega_k}^{d(K-1)}\big)\mathbf{u}_{\mathbf{v}_{i^*}}$. This completes the proof.

One of the assumptions in Theorem \ref{main result} is that workers are honest and send the correct values of $f(W_n),\mathsf{tag}_n,~n\in[N]$ to data owners. Suppose that $a\in\mathbb{N}, a\le N$ workers are malicious and may send incorrect $f(W_n)$ or $\mathsf{tag}_n$ back to data owners. Incorrect tag values are detrimental, because they mislead data owners in partitioning. Following the notations in converse proof, assume that $\mathbf{y}_i=\mathbf{Q}_i\mathbf{u}_{\mathbf{v}_i}+\mathbf{e}_i$  is the truth, and $\mathbf{y}_i=\mathbf{Q}_i\mathbf{u}'_{\Pi(\mathbf{v}_i)}+\mathbf{e}'_i$ is an assumption that an honest data owner makes because of not knowing the truth, and $\mathbf{e}_i, \mathbf{e}'_i$ are error vectors. Therefore, in the presence of adversarial workers, equation \eqref{Q(u-u')=0 label} does not hold anymore, and the approach used in the previous converse proof can not be used here. However, we know from coding theory that results from $t=v^{\beta}d(K-1)+1+2a$ workers are enough to decode $f(X_k),k\in\mathcal{H}$ correctly.

\begin{appendices}

\section{Proof of Theorem \ref{random tag function theory}}
In order to show that a tag function exists, we consider all $J$ functions from $\mathbb{U}^K$ to $\mathbb{U}$ to be equiprobable and calculate the average probability of the event that two arbitrary workers produce the same tag. Consider two workers $n_1,n_2\in[N]$ that receive different $X_{\mathcal{A},n_1}$ and $X_{\mathcal{A},n_2}$ from adversarial data owners, and $X_{\mathcal{H}}$ from honest data owners. We show that the average probability of $\mathsf{tag}_{n_1}=\mathsf{tag}_{n_2}$, taken over all $J$ functions, and the messages of honest data owners, is very small. Then, we conclude that there exists a function, for which the average probability of $\mathsf{tag}_{n_1}=\mathsf{tag}_{n_2}$, taken over  the messages of honest data owners, is very small, which means a tag function exists.

Any function $J$ from $\mathbb{U}^K$ to $\mathbb{U}$, i.e. from a large space to a small space, would map some different inputs to the same outputs. For a function $J$, and two adversarial messages $X_{\mathcal{A},n_1}$ and $X_{\mathcal{A},n_2}$, we define a function $g$ such that $\mathcal{S} \coloneqq  g(J,X_{\mathcal{A},n_1},X_{\mathcal{A},n_2})\subseteq \mathbb{U}^h$ is the set that
\begin{align*}
    X_{\mathcal{H}} \in \mathcal{S} : J(X_{\mathcal{H}}, X_{\mathcal{A},n_1}) = J(X_{\mathcal{H}}, X_{\mathcal{A}, n_2}), \\
    X_{\mathcal{H}} \in \mathbb{U}^h\setminus\mathcal{S} : J(X_{\mathcal{H}}, X_{\mathcal{A}, n_1}) \neq J(X_{\mathcal{H}}, X_{\mathcal{A}, n_2}).
\end{align*}
Since the data of honest data owners are uniformly and independently chosen from $\mathbb{U}^h$, and are independent of $X_{\mathcal{A},n_1}$ and $X_{\mathcal{A},n_2}$,
\begin{align}\label{first}
    \textrm{Pr}\big( J(X_{\mathcal{H}},X_{\mathcal{A},n_1}) = J(X_{\mathcal{H}},X_{\mathcal{A},n_2})\big) = \frac{|\mathcal{S}|}{\mathbb{U}^h},
\end{align}
where $\mathcal{S} = g(J,X_{\mathcal{A},n_1},X_{\mathcal{A},n_2})\subseteq \mathbb{U}^h$. Since all $J$ functions are equiprobable, i.e. $\textrm{Pr}(J) = \frac{1}{|\mathbb{U}|^K}$, 
\begin{align}\label{second}
    \sum_{J:g(J,X_{\mathcal{A},n_1},X_{\mathcal{A},n_2})=\mathcal{S}} \textrm{Pr}(J) = (1-\frac{1}{|\mathbb{U}|})^{(|\mathbb{U}|^h-|\mathcal{S}|)}(\frac{1}{|\mathbb{U}|})^{|\mathcal{S}|}.
\end{align}
Using \eqref{first} and \eqref{second},
\begin{align}
    \textrm{Pr}(\mathsf{tag}_{n_1}=\mathsf{tag}_{n_2}) &= \sum_{\mathcal{S}}\sum_{J:g(J,X_{\mathcal{A},n_1},X_{\mathcal{A},n_2})=\mathcal{S}} \textrm{Pr}(J) \textrm{Pr}\big( J(X_{\mathcal{H}},X_{\mathcal{A},n_1}) = J(X_{\mathcal{H}},X_{\mathcal{A},n_2})\big) \nonumber\\
    &= \sum_{\mathcal{S}} (1-\frac{1}{|\mathbb{U}|})^{(|\mathbb{U}|^h-|\mathcal{S}|)}(\frac{1}{|\mathbb{U}|})^{|\mathcal{S}|} \frac{|\mathcal{S}|}{\mathbb{U}^h} \nonumber\\
    &= \sum_{s=1}^{|\mathbb{U}|^h}   \binom{|\mathbb{U}|^h}{s} (1-\frac{1}{|\mathbb{U}|})^{(|\mathbb{U}|^h-s)}(\frac{1}{|\mathbb{U}|})^s \frac{s}{\mathbb{U}^h} \nonumber\\
    &= \sum_{s=1}^{|\mathbb{U}|^h}   \binom{|\mathbb{U}|^h-1}{s-1} (1-\frac{1}{|\mathbb{U}|})^{(|\mathbb{U}|^h-s)}(\frac{1}{|\mathbb{U}|})^s = \frac{1}{|\mathbb{U}|}
\end{align}
If $\mathbb{U}$ is large enough that $\frac{1}{\epsilon}\le |\mathbb{U}|$, then $\textrm{Pr}(\mathsf{tag}_{n_1}=\mathsf{tag}_{n_2})\le \epsilon$, and the proof is complete.

\section{Proof of Lemma \ref{number of non-effective permutations lemma}}\label{number of non-effective permutations proof}
In this appendix, we prove that there are $(v!)^{\beta}$ non-effective permutations of the relation matrix. In Section \ref{definitions section}, we examined an example, and proved the non-effectiveness of a permutation of its relation matrix. The understanding of that example is helpful for understanding the proof in this section.

We show that all permutations of the relation matrix of the form $\Pi = \pi_1\times\pi_2\times\dots\times\pi_{\beta}$ are non-effective, i.e. $\mathbf{P}^{\Pi}\mathbf{M}=\mathbf{0}$, where $\pi_i:[v]\rightarrow[v]$, $i\in[\beta]$ is a permutation on $[v]$. We parse the proof into the following steps.

\subsection*{Step 1}
Consider column $i\in[L]$, $\mathbf{c}_i=[c_{\mathbf{v}_1,i}^\intercal~\dots~c_{\mathbf{v}_{v^{\beta}},i}^\intercal]^\intercal$ of $\mathbf{M}$, where the length of each so called \textit{sub-column} $c_{\mathbf{v}_i},i\in[v^{\beta}]$ is $d(K-1)+1$, $\{\mathbf{v}_1,\dots,\mathbf{v}_{v^{\beta}}\}=\mathcal{V}$, and $L$ is the length of monomial vector $\mathbf{X}$. 

The sub-column $c_{\mathbf{v},i}$, $\mathbf{v}\in\mathcal{V}, i\in[L]$, contains the coefficients of monomial $\mathbf{X}[i]$ in coefficient sub-vector $\mathbf{u}_{\mathbf{v}}$. If coefficient sub-vector $\mathbf{u}_{\mathbf{v}}$ does not have the monomial $\mathbf{X}[i]$, then $c_{\mathbf{v},i}=\mathbf{0}$. For two coefficient sub-vectors $\mathbf{u}_{\mathbf{v}}$ and $\mathbf{u}_{\mathbf{v'}}$ that both have $\mathbf{X}[i]$, $c_{\mathbf{v},i}=c_{\mathbf{v'},i}\coloneqq c^*_i$, because these coefficient sub-vectors only differ in adversarial messages, and not the constant coefficients. Therefore, the nonzero sub-columns in a column of $\mathbf{M}$ are the same. This is evident in $\mathbf{M}$ of Example \ref{example 1} in \eqref{M for example 1}.

We define two neighbour columns of $\mathbf{M}$ as following. Consider two columns $i,i'\in[L]$, where the adversarial and honest messages in monomials $\mathbf{X}[i]$ and $\mathbf{X}[i']$ are the same, but the versions of adversarial messages in them are different. For example, $i=6,i'=7$ in \eqref{M for example 1} correspond to monomials $X_1^{(1)}X_2^{(1)}$ and $X_1^{(2)}X_2^{(1)}$, which differ in the version of $X_1$, but both are made of the same messages $X_1$ and $X_2$.
For two neighbour columns, $c^*_i=c^*_{i'}$, i.e. the nonzero sub-columns in two neighbour columns are the same.
For example, the nonzero sub-columns of columns $6$ and $7$ in \eqref{M for example 1} are the same.

\subsection*{Step 2}
Let $\mathcal{V}_{\mathbf{c}_i}$, $i\in[L]$, be the set of version vectors of coefficient sub-vectors that have the monomial $\mathbf{X}[i]$, or in other words, version vectors that correspond to nonzero sub-columns of $\mathbf{c}_i$. For instance, in Example \ref{example 1}, using \eqref{coefficient vector example} and \eqref{M for example 1}, we find that $\mathcal{V}_{\mathbf{c}_1}=\{[1,1,0],[1,2,0]\}$, $\mathcal{V}_{\mathbf{c}_2}=\{[2,1,0],[2,2,0]\}$, 
$\mathcal{V}_{\mathbf{c}_3}=\{[1,1,0],[2,1,0]\}$, and so forth. From the previous step, we know that $c_{\mathbf{v},i}=c_i^*$ for $\mathbf{v}\in\mathcal{V}_{\mathbf{c}_i}$.
For each column $\mathbf{c}_i$ of $\mathbf{M}$, there are two possible cases.
\begin{itemize}
    \item \textbf{Case 1}. The monomial $\mathbf{X}[i]$ is comprised of honest messages only, e.g. $(X_3)^2$. In this case, all sub-columns of $\mathbf{c}_i$ are equal and nonzero, i.e. $c_{\mathbf{v}_1,i} =\dots=c_{\mathbf{v}_{v^{\beta}},i}\coloneqq c^*_{i}$, and $\mathcal{V}_{\mathbf{c}_i}=\mathcal{V}$. In Example \ref{example 1}, $\mathbf{c}_5$ of $\mathbf{M}$ in \eqref{M for example 1} is of this kind, as
    \begin{align}
        c_{\mathbf{v}_1,5} = c_{\mathbf{v}_2,5} = c_{\mathbf{v}_3,5} = c_{\mathbf{v}_4,5} = \begin{bmatrix}
        \frac{1}{4} \\ -\frac{3}{2} \\ \frac{13}{4} \\ -3 \\ 1
        \end{bmatrix}
    \end{align}
    \item \textbf{Case 2}.
    The monomial $\mathbf{X}[i]$ contains adversarial messages, e.g. $X[i]=(X^{(1)}_1)^2$. In this case, version vectors $\mathbf{v}\in\mathcal{V}_{\mathbf{c}_i}$ have the same version value for the adversarial messages in $X[i]$. For example, $\mathcal{V}_{\mathbf{c}_1}=\{[1,1,0],[1,2,0]\}$, and version vectors in $\mathcal{V}_{\mathbf{c}_1}$ indicate version $(1)$ for $X_1$ because $X[i]=(X^{(1)}_1)^2$. As another example, for $X[11]=X_1^{(2)}X_3$ in \eqref{M for example 1}, $\mathcal{V}_{\mathbf{c}_{11}}=\{[2,1,0],[2,2,0]\}$, and version vectors in $\mathcal{V}_{\mathbf{c}_{11}}$ indicate version $(2)$ for $X_1$.
\end{itemize}

\subsection*{Step 3}
We know $\mathbf{P}\mathbf{c}_i=\mathbf{0}$, $i\in[L]$. Recall that $\mathbf{P}=[\mathbf{P}_{\mathbf{v}_1}\dots \mathbf{P}_{\mathbf{v}_{v^{\beta}}}]$, where $\{\mathbf{v}_1,\dots,\mathbf{v}_{v^{\beta}}\}=\mathcal{V}$.
Therefore,
\begin{align}
    \mathbf{P}\mathbf{c}_i = \sum_{\mathbf{v}\in\mathcal{V}_{\mathbf{c}_i}} \mathbf{P}_{\mathbf{v}}c_{\mathbf{v},i} = \bigg(\sum_{\mathbf{v}\in\mathcal{V}_{\mathbf{c}_i}} \mathbf{P}_{\mathbf{v}}\bigg)c^*_i= \mathbf{0}
\end{align}
Consider a permutation of form $\Pi = \pi_1\times\pi_2\times\dots\times\pi_{\beta}$, where $\pi_i:[v]\rightarrow[v]$, $i\in[\beta]$. Assume that this permutation is as following.
\begin{itemize}
    \item For a particular adversarial message $X_k, k\in\mathcal{A}$, and two particular version values $m,m'\in[v]$,
    \begin{align*}
        \pi_k[m] &= m', \\
        \pi_k[m'] &= m, \\
        \pi_k[l] &= l, l\neq m,m'.
    \end{align*}
    \item For all $j\in\mathcal{A}, j\neq k$,
    \begin{align*}
        \pi_j[l]=l, \quad l\in[v].
    \end{align*}
\end{itemize}
It suffices to prove the non-effectiveness of this permutation, because any other permutation is built from units of permutations like this. For each column $\mathbf{c}_i$, there are two possible cases following the cases in the previous step.
\begin{itemize}
    \item \textbf{Case 1}. If $\mathcal{V}_{\mathbf{c}_i}=\mathcal{V}$, then 
    \begin{align}
        \mathbf{P}^{\Pi}\mathbf{c}_i=(\sum_{\mathbf{v}\in\mathcal{V}} \mathbf{P}_{\mathbf{v}})c^*_{i}=\mathbf{P}\mathbf{c}_i=\mathbf{0}.
    \end{align}
    \item \textbf{Case 2}. Assume that monomial $\mathbf{X}[i]$ contains version $m$ of the adversarial message $k$. There is a monomial $\mathbf{X}[i']$, $i'\in[L]$ that differs from $\mathbf{X}[i]$ only in the version of $X_k$, which is $m'$ in $\mathbf{X}[i']$. Clearly,
    \begin{align}\label{Pc'_i}
        \mathbf{P}\mathbf{c}_{i'} = \sum_{\mathbf{v}\in\mathcal{V}_{\mathbf{c}_{i'}}} \mathbf{P}_{\mathbf{v}}c_{\mathbf{v},{i'}} = \bigg(\sum_{\mathbf{v}\in\mathcal{V}_{\mathbf{c}_{i'}}} \mathbf{P}_{\mathbf{v}}\bigg)c^*_{i'}= \mathbf{0}.
    \end{align}
     It is easy to verify that for each $\mathbf{v}'\in\mathcal{V}_{\mathbf{c}_{i'}}$, there is a $\mathbf{v}\in\mathcal{V}_{\mathbf{c}_i}$, where $\mathbf{v}'[k]=m'$ and $\mathbf{v}[k]=m$.
     From the definition in Step 1, we know that columns $i$ and $i'$ are neighbours and thus, $c^*_{i}=c^*_{i'}$. Therefore,
    \begin{align}
        \mathbf{P}^{\Pi}\mathbf{c}_i= \sum_{\substack{ \mathbf{v'}\in\mathcal{V}_{\mathbf{c}_{i'}} \\ \mathbf{v}\in\mathcal{V}_{\mathbf{c}_i}}} \mathbf{P}_{\mathbf{v'}}c_{\mathbf{v},{i}} = \bigg(\sum_{\mathbf{v}\in\mathcal{V}_{\mathbf{c}_{i'}}} \mathbf{P}_{\mathbf{v}}\bigg)c^*_{i}= \bigg(\sum_{\mathbf{v}\in\mathcal{V}_{\mathbf{c}_{i'}}} \mathbf{P}_{\mathbf{v}}\bigg)c^*_{i'} = \mathbf{0},
    \end{align}
    where the last equality is due to \eqref{Pc'_i}. In summary, we proved $\mathbf{P}^{\Pi}\mathbf{c}_i=\mathbf{0}$ using $\mathbf{P}\mathbf{c}_{i'} = \mathbf{0}$.
\end{itemize}

\end{appendices} 
    
\bibliographystyle{IEEEtran}
\bibliography{ref}
\end{document}